\newcommand{\rscl}[1]{\mathrm{#1}}  
\renewcommand{\vec}[1]{\bm #1}
\newcommand{\rvec}[1]{\mathbf{#1}}
\newcommand{\mat}[1]{\bm #1}
\newcommand{\rmat}[1]{\mathbf{#1}}
\newcommand{\vech}[1]{\hat{\bm #1}}
\renewcommand{\math}[1]{\hat{\bm #1}}
\newcommand{\matb}[1]{\bar{\bm #1}}
\renewcommand{\cal}[1]{\mathcal{#1}}
\newcommand{\bfit}[1]{\textbf{\textit{#1}}}
\newcommand{\Tr}{\operatorname{Tr}}
\newcommand{\C}{\mathbb{C}}
\newcommand{\E}{\mathbb{E}}
\renewcommand{\P}{\mathbb{P}}
\newcommand{\Q}{\mathbb{Q}}
\newcommand{\Ph}{\hat{\mathbb{P}}}
\newcommand{\Po}{\mathbb{P}_0}
\newcommand{\Pb}{\bar{\mathbb{P}}}
\renewcommand{\d}{\mathrm{d}}
\renewcommand{\th}{\text{th}}
\newcommand{\T}{\mathsf{T}}
\renewcommand{\H}{\mathsf{H}}
\renewcommand{\st}{\text{s.t.}}
\DeclareMathOperator*{\argmin}{argmin}
\newcommand{\Px}{\P_{\rvec x}}
\newcommand{\Ripn}{\mat {R}_{{i+n}}}
\newcommand{\Ripnh}{\math{R}_{{i+n}}}
\newcommand{\defeq}{\coloneqq}
\newcommand{\stp}{\hfill $\square$}     
\newcommand{\captext}[1]{\texorpdfstring{#1}{}} 
\newcommand{\quotemark}[1]{``#1”}
\definecolor{hl-bg-color}{RGB}{255,255,215}
\definecolor{new-magenta}{RGB}{255,0,255}
\newcommand*{\HIGHLIGHT}{}
\newtheorem{theorem}{{Theorem}}
\newtheorem{corollary}{{Corollary}}
\newtheorem{method}{{Method}}
\newtheorem{remark}{{Remark}}
\newtheorem{definition}{{Definition}}
\newtheorem{example}{{Example}}
\newtheorem{insight}{{Insight}}
\begin{document}
\newpage
\title{
Distributionally Robust Adaptive Beamforming
}

\author{Shixiong Wang,~
        Wei~Dai,~
        Geoffrey~Ye~Li,~\IEEEmembership{Fellow,~IEEE}
\thanks{The authors are with the Department of Electrical and Electronic Engineering, Imperial College London, London SW7 2AZ, United Kingdom (E-mails: s.wang@u.nus.edu; wei.dai1@imperial.ac.uk; geoffrey.li@imperial.ac.uk).
}
}

\maketitle

\begin{abstract}
As a fundamental technique in array signal processing, beamforming plays a crucial role in amplifying signals of interest (SoI) while mitigating interference plus noise (IPN). 
When uncertainties exist in the signal model or the data size of snapshots is limited, the performance of beamformers significantly degrades. 
In this article, we comprehensively study the conceptual system, theoretical analysis, and algorithmic design for robust beamforming against uncertainties in the assumed snapshot or IPN covariances. Since such robustness is specific to the probabilistic uncertainties of snapshots or IPN signals, it is referred to as distributional robustness. Particularly, four technical approaches for distributionally robust beamforming are proposed, including locally distributionally robust beamforming, globally distributionally robust beamforming, regularized beamforming, and Bayesian-nonparametric beamforming. 
In addition, we investigate the equivalence among the four technical approaches and suggest a unified distributionally robust beamforming framework. Moreover, we show that the resolution of power spectra estimation using distributionally robust beamforming can be greatly refined by incorporating the characteristics of subspace methods, and hence, the accuracy of IPN covariance reconstruction can be improved, especially when the interferers are close to the SoI. As a result, the robustness of beamformers based on IPN covariance estimation can be further enhanced.
\end{abstract}

\begin{keywords}
Robust Beamforming, Distributional Robustness, Regularization, Bayesian Nonparametrics
\end{keywords}

\section{Introduction} \label{sec:introdction}
\IEEEPARstart{A}{\lowercase{daptive}} beamforming, or simply beamforming, has shown numerous successful applications in array signal processing, e.g., wireless communications, radar, and sonar, to enhance signals of interest (SoI) and suppress interference plus noise (IPN). Typical examples include the waveform, power, and direction-of-arrival (DoA) estimation of the SoI, as well as the maximization of the array's output signal-to-interference-plus-noise ratio (SINR) \cite{van2002optimum,li2006robust,benesty2017fundamentals,elbir2023twenty}. To achieve these application purposes, a large body of representative beamformers have been put forward \cite{van2002optimum,benesty2017fundamentals}, such as the minimum mean-squared error (MMSE) beamformer for waveform estimation, the Bartlett beamformer (i.e., delay-and-sum beamformer, conventional beamformer, or matched filter) for power and DoA estimation, the minimum power distortionless response (MPDR) beamformer (i.e., Capon beamformer) for power and DoA estimation, and the minimum variance distortionless response (MVDR) beamformer for SINR maximization, etc. The difference between the MPDR and MVDR beamformers is whether the snapshot covariance or the IPN covariance is used \cite{ehrenberg2010sensitivity}, that is, SoI-contaminated or SoI-free \cite{vorobyov2003robust}. In the SoI-free case, the snapshot covariance coincides with the IPN covariance, so the MPDR and MVDR beamformers become identical. As a convention and without loss of generality \cite[p.~1705]{li2003robust}, \cite{lorenz2005robust}, this article employs SINR as the performance metric for beamformer design, and hence, the MPDR and MVDR beamformers are particularly focused on \cite{kim2008robust,huang2023robust}.

\subsection{Uncertainty Issues and Literature Review}\label{subsec:literature-review}
The SINR performance of MPDR and MVDR beamformers, however, degrades significantly when uncertainties exist in the assumed steering vector or in the estimated snapshot and IPN covariances \cite{cox1987robust,guerci1999theory,vorobyov2003robust,vorobyov2013principles}, although in such a case the MVDR beamformer tends to outperform the MPDR beamformer \cite{ehrenberg2010sensitivity}. These uncertainties include, for example, array calibration errors, array pointing (i.e., looking-direction) errors, limited data size of snapshots, and non-stationarity of emitting signals or channel noises (e.g., time-varying powers) \cite{vorobyov2013principles,li2006robust,guerci1999theory}. To handle these uncertainties, several archetypal treatments have been introduced in the past to improve the robustness of beamformers. Herein, the robustness of a beamformer means the ability to be insensitive to reasonably small uncertainties \cite{cox1987robust,vorobyov2003robust,kim2008robust}, and consequently, the performance of a robust beamformer in real-world operation can remain as satisfactory as that in the design phase  \cite{vorobyov2013principles}. As per the design motivations, existing treatments can be categorized into three main streams: 1) robustness against uncertainties in steering vectors, 2) robustness against uncertainties in snapshot covariances, and 3) robustness based on IPN covariance reconstruction.

\subsubsection{Robustness Against Uncertainties in Steering Vectors}
Uncertainties in steering vectors come from array calibration errors, pointing errors (i.e., DoA mismatches), etc \cite{vorobyov2003robust,li2003robust,lorenz2005robust,khabbazibasmenj2012robust,huang2023robust}. Representatives in this stream include the following: 
\begin{enumerate}[\text{1}a)]
\item 
Linearly constrained minimum variance (LCMV) beamforming \cite{steele1983comparison} \cite[Section~2.5]{vorobyov2013principles}, \cite[Section~6]{er1985alternative};

\item Covariance matrix tapering to widen nulls and beams, which leverages the phase dithering effect \cite{guerci1999theory}, \cite[Section~2]{shahbazpanahi2003robust}. Some technical equivalences between this approach and Method 1a) are noted in \cite{zatman2000comments};

\item 
Bayesian beamforming, which assumes a probability distribution of the DoA, and then obtains the expected beamformer under this DoA distribution  \cite{bell2000bayesian,chakrabarty2017bayesian}; 

\item 
Beampattern control, which, conceptually similar to Method 1a), requires the array power response to comply with some prerequisites \cite{er1985alternative,liao2017robust};

\item 
Steering vector estimation via eigenspace projection \cite{chang1992performance,feldman1994projection} or output power maximization \cite{hassanien2008robust,khabbazibasmenj2012robust,huang2019new}. The eigenspace projection indirectly increases the array output power in the MPDR beamforming scheme. Also, the covariance-fitting methods in \cite[Eqs.~(8), (9)]{li2003robust} and \cite[Eqs.~(10), (13)]{stoica2003robust} amount to output power maximization;

\item 
Worst-case array unity response  \cite{vorobyov2003robust,li2003robust,shahbazpanahi2003robust,lorenz2005robust,huang2020quadratic,huang2023robust}. Regarding objective functions, Method 1e) is technically equivalent to this approach under the MPDR beamforming scheme; see \cite[Eq.~(17)]{kim2008robust}, \cite[Eq.~(23)]{khabbazibasmenj2012robust}, \cite[Eq.~(12)]{huang2019new}, \cite[Appendices A,~C]{li2003robust};

\item 
Probabilistic worst-case array unity response, which can be technically cast into Method 1f) \cite{vorobyov2008relationship};

\item
Imposing quadratic constraints on beamformers \cite[p.~1366]{cox1987robust}, \cite[p.~742]{rubsamen2011robust}. Some beampattern control problems are technically equivalent to this approach \cite[Eq.~(22)]{er1985alternative}.
\end{enumerate}
Methods 1a), 1b), and 1c) are ad hoc to DoA mismatches, while Methods 1d), 1e), 1f), 1g), and 1h) can apply to arbitrary types of uncertainties in the steering vector. The philosophy of robustness behind Methods 1a), 1b), 1d), 1f), and 1g) is to identify a plausible uncertainty set for the actual but unknown steering vector, and then guarantee that the array power responses for all elements in this set are controlled, for example, greater than unity. In the case of DoA mismatch, the uncertainty set is induced by the uncertainty region of DoA under the array geometry. In contrast, the philosophy of robustness behind Method 1c) is to consider an uncertainty region of DoA through employing a distribution of DoA. Moreover, the philosophy of robustness behind Method 1e) is that the actual steering vector should lie in the subspace of signal-plus-interference or should increase the array output power. In addition, the philosophy of robustness behind Method 1h) is to limit the array's sensitivity to various errors because the sensitivity can be measured by the 2-norm of beamformers.

All methods, except Method 1e), can be integrated or adapted into both the MPDR and MVDR beamforming to enhance their robustness. Method 1e) cannot be considered in the MVDR beamforming scheme because in this case, the subspace of signal-plus-interference and the array output power are not applicable.

\subsubsection{Robustness Against Uncertainties in Snapshot Covariances}
Uncertainties in snapshot covariances can be caused by the limited data size of snapshots, the non-stationarity of signal characteristics, etc \cite{carlson1988covariance,guerci1999theory,shahbazpanahi2003robust,stoica2008using,yang2018high,huang2023robust}. 
When the SoI is absent in snapshots, the snapshot covariance coincides with the IPN covariance. 
Representatives in this stream include the following:
\begin{enumerate}[\text{2}a)]
    \item Advanced techniques to estimate the covariance matrix using snapshots, for example, the M-estimator \cite{elkhalil2017fluctuations,zoubir2012robust}, the spiked covariance estimator \cite{yang2018high}, the eigenvalue thresholding method \cite{harmanci2000relationships}, the diagonal loading method \cite{carlson1988covariance,mestre2005finite,zhang2016robust,de2018loading}, the prior-knowledge embedding method \cite{stoica2008using}, etc. The M-estimator can suppress outliers in snapshots, while the remaining can combat snapshot scarcity.

    \item Worst-case optimization \cite[Eq.~(46)]{shahbazpanahi2003robust}, \cite[Eq.~(10)]{huang2023robust}, \cite[Eq.~(9)]{kim2008robust}. This approach can be shown to technically amount to the diagonal loading method \cite{shahbazpanahi2003robust}.
\end{enumerate}
The philosophy of robustness behind Method 2a) is to obtain a better covariance estimate, which is closer to the ground truth than the usual sample covariance matrix. By feeding this better covariance to beamforming, the performance is expected to improve. On the contrary, the philosophy of robustness behind Method 2b) is to optimize the worst-case performance, e.g., to maximize the worst-case SINR \cite{kim2008robust}. In \cite{shahbazpanahi2003robust,huang2023robust}, this worst-case SINR is determined by minimizing SINR over the covariance estimation error, which is characterized by an F-norm ball.

Depending on whether the SoI is present in snapshots or not, these methods correspond to the robust MPDR and MVDR beamformers, respectively.

\subsubsection{Robustness Based on IPN Covariance Reconstruction}
Motivated by the fact that the MVDR beamformer tends to be more robust than the MPDR beamformer \cite{ehrenberg2010sensitivity}, another philosophy to achieve robustness is to estimate the IPN covariance matrix from snapshots before beamforming. This approach applies to the case where the SoI is present in snapshots. Representatives in this stream include the following:
\begin{enumerate}[\text{3}a)]
\item Strategies via spectra estimation, for example, the Capon spectra \cite{gu2012robust} and the maximum-entropy spectra \cite{mohammadzadeh2020maximum}. Improvements in terms of uncertainty-awareness and computation reduction are reported in \cite{huang2015robust} and \cite{zhang2015interference,mohammadzadeh2022covariance}, respectively;

\item Strategies via SoI cancellation, e.g., \cite[Eq.~(26)]{ruan2013robust} and \cite[Eq.~(27)]{ruan2016robust}, where the snapshot covariance is first estimated and then the IPN covariance is recovered through subtracting the SoI component from the snapshot covariance.
\end{enumerate}

\subsection{Problem Statements}\label{subsec:problem-statements}
Although various challenges in robust adaptive beamforming have been attacked, the following issues remain unsolved.
\begin{enumerate}[\text{I}1)]
    \item In the literature, although the notion of \quotemark{robust beamforming} is widely used, the quantification and formalization have never been studied. The only effort in this direction is found in, e.g., \cite{cox1987robust,rubsamen2011robust}, where the authors propose to use the white-noise gain \cite{cox1987robust} or the (normalized) squared magnitude \cite{rubsamen2011robust} as a quantitative measure of the robustness of a beamformer. However, this robustness formalism does not apply to the uncertainties in the snapshot and IPN covariances. On the other hand, although numerous minimax-optimization-based approaches claim the robustness of their resultant beamformers \cite[Eq.~(46)]{shahbazpanahi2003robust}, \cite[Eq.~(10)]{huang2023robust}, the rigorous relation between the beamformer's robustness and minimax optimization remains unclear. Therefore, a comprehensive study of the conceptual system, theoretical analysis, and algorithmic design for robust beamforming, \textit{specifically against uncertainties in snapshot and IPN covariances}, has to be conducted.

    \item Regarding robustness against uncertainties in snapshot covariances, the relations between Methods 2a) and 2b) are unclear. To clarify further, in the existing literature, researchers tacitly employ the eigenvalue thresholding method, the diagonal loading method, or the prior-knowledge embedding method, and then focus on tuning the involved parameters using different empirical or algorithmic criteria \cite{carlson1988covariance,mestre2005finite,stoica2008using,zhang2016robust,de2018loading}. However, the rationale behind these choices has not been systematically elaborated. A specific question is as follows: In addition to diagonal loading \cite{shahbazpanahi2003robust}, can eigenvalue thresholding \cite{harmanci2000relationships} and prior-knowledge embedding \cite{stoica2008using} also be shown to optimize the worst-case performance? This issue remains valid for both the MPDR beamformer that relies on the snapshot covariance and the MVDR beamformer that relies on the IPN covariance.

    \item In the robustness framework based on IPN covariance reconstruction, i.e., Methods 3a) and 3b), the estimation error of IPN is also unavoidable. Hence, it is natural to ask: Can the robustness of this framework be further improved using the robustness strategies in Methods 2a) and 2b)? This question has not been answered in the literature.

    \item In Method 3a) for IPN covariance reconstruction, the resolution of spectra estimation should be sufficiently high to suppress interferers that locate close to the SoI. This concern has also been raised in \cite[p.~1647]{huang2015robust}, \cite[Fig.~1]{mohammadzadeh2020maximum}. However, the resolution of the Capon spectra and the maximum-entropy spectra are still limited \cite[Fig.~3]{schmidt1986multiple}, \cite{johnson1982application}, and therefore, higher-resolution spectra estimation methods are expected to improve the estimation accuracy of the IPN covariance, especially when interferers are close to the SoI. The challenge is that modern high-resolution spectra estimation methods, such as multiple signal classification (MUSIC) and other subspace strategies, do not directly provide spectral estimates. Instead, they generate pseudo-spectra to identify the DoAs of interferers. On the other hand, to ensure high accuracy of IPN covariance reconstruction, the power spectra estimation should be unbiased, that is, no gaps between identified DoAs and true DoAs of interferers. However, the maximum-entropy spectra tend to have large biases \cite[Fig.~4]{schmidt1986multiple}. As for Method 3b), it innately assumes that interferers are well-separated from the SoI, which limits its applicability when interferers are close to the SoI.
\end{enumerate}

\subsection{Contributions}
To address the four issues aforementioned, this article makes the following contributions.
\begin{enumerate}[\text{C}1)]
    \item We propose a quantitative definition of robustness \textit{against uncertainties in snapshot and IPN covariances}, and prove that many existing beamformers in Methods 2a) and 2b) are indeed robust; for example, diagonal loading, eigenvalue thresholding, prior-knowledge embedding can be shown to amount to worst-case optimization; see Definitions \ref{def:local-robust-beamformer} and \ref{def:global-robust-beamformer}, Theorems \ref{thm:robust-bf-R0}, \ref{thm:sol-minimax-robust-bf-Rh}, and \ref{thm:sol-robust-bf-Rh-global}, Examples \ref{ex:eig-thres}, \ref{ex:diagonal-loading}, \ref{ex:regularization}, \ref{ex:prior-knowledge}, and \ref{ex:diag-loading-bayes}, Corollaries \ref{cor:sol-capon-robust-R} and \ref{cor:sol-capon-robust-R-2}, and Insights \ref{insight:regularization} and \ref{insight:bayes-model}. To emphasize the probabilistic nature of snapshots and IPN signals, and differentiate from the robustness against uncertainties in steering vectors, we call the introduced concept as \quotemark{\textit{distributional robustness}}. Subsequently, four technical approaches for distributionally robust beamforming are proposed, i.e., locally distributionally robust beamforming, globally distributionally robust beamforming, regularized beamforming, and Bayesian-nonparametric beamforming; see Models \eqref{eq:robust-bf-Rh}, \eqref{eq:robust-bf-Rh-global}, \eqref{eq:sol-capon-robust-R-2}, and \eqref{eq:capon-bayesian-compact}. The equivalence among these approaches is investigated and a unified distributionally robust beamforming framework is suggested; see Insight \ref{insight:unified-framework}. 

    \item We show that by incorporating the characteristics of the MUSIC method into the proposed distributionally robust (DR) beamforming framework, the resolution of spectra estimation can be greatly refined. As a result, the SINR performance of DR beamformers based on IPN covariance reconstruction can be largely improved when interferers are close to the SoI; see Insight \ref{insight:high-resolution}, Method \ref{method:robust-capon-beamforming}, and Algorithm \ref{algo:bf-doa}.
\end{enumerate}

Contribution C1) solves Issues I1), I2), and I3), while Contribution C2) addresses Issue I4).

\subsection{Notations} 
Uppercase symbols (e.g., $\mat X$) denote matrices while lowercase ones are reserved for vectors (e.g., $\vec x$). We use upright and italic fonts for random and deterministic quantities, respectively; e.g., $\rmat X$ and $\rvec x$, and $\mat X$ and $\vec x$. Let $\C^d$ denote the $d$-dimensional space of complex numbers. The running index set $[K]$ induced by integer $K$ is defined as $[K] \defeq \{1, 2, 3, \ldots, K\}$. Let $\|\mat X\|$, $\Tr \mat X$, $\mat X^{-1}$, $\mat X^\T$, and $\mat X^\H$ denote a norm, the trace, the inverse (if exists), the transpose, and the conjugate transpose of matrix $\mat X$; the definition of a matrix norm will be specified in contexts. For two matrices $\mat A$ and $\mat B$, $\mat A \succeq \mat B$ means that $\mat A - \mat B$ is positive semidefinite. The $d$-dimensional identity matrix is written as $\mat I_d$ and a zero matrix/vector with compatible dimensions is as $\vec 0$. Let $\cal{CN} (\vec \mu, \mat \Sigma, \mat \Sigma')$ denote the complex normal distribution with mean $\vec \mu$, covariance $\vec \Sigma$, and pseudo-covariance $\mat \Sigma'$; if $\mat \Sigma'$ is not specified, we admit $\mat \Sigma' = \mat 0$. Let $\E_{\P}[\cdot]$ denote the expectation operator under distribution $\P$.



\section{System Model, Existing Works, and Research Questions}\label{sec:problem-formulation}
\subsection{System Model}\label{subsec:system-model}
This article focuses on base-band and narrow-band signal processing. Suppose $K$ signals impinge on an $N$-element antenna array. Let $\rvec x \in \C^N$ denote the array input (i.e., snapshot), $\rscl s_k \in \C$ the $k^\th$ incident signal for $k \in [K]$, and $\theta_k$ the DoA of $\rscl s_k$. The signal model for a single snapshot is
\begin{equation}\label{eq:signal-model}
    \rvec x = \sum^K_{k=1} \vec a_0(\theta_k) \rscl s_k + \rvec v,
\end{equation}
where $\vec a_0(\theta) \in \C^N$ denotes the array steering vector in direction $\theta$ and $\rvec v \in \C^N$ the channel noise. In usual array signal processing literature, it is assumed that $\rscl s_k \sim \cal{CN} (0, \sigma^2_k)$ and $\rvec v \sim \cal{CN} (\bm 0, \sigma^2_n \mat I_N)$, in which $\sigma^2_k$ is the signal power and $\sigma^2_n$ is the noise power. Without loss of generality, we suppose that $\rscl s_1$ is the signal of interest and $\rscl s_k$ for $k = 2, 3, \ldots, K$ are interference signals. Let $\vec w \in \C^N$ be a beamformer. The corresponding array output is 
\begin{equation}\label{eq:array-output}
    \rscl y = \vec w^\H \rvec x = \vec w^\H \vec a_0(\theta_1) \rscl s_1 + \vec w^\H \sum^K_{k=2} \vec a_0(\theta_k) \rscl s_k + \vec w^\H \rvec v,
\end{equation}
and the array output SINR is
\begin{equation}\label{eq:sinr}
    h(\vec w, \vec a_0(\theta_1), \Ripn) \defeq \frac{\sigma^2_1 \vec w^\H \vec a_0(\theta_1) \vec a_0(\theta_1)^\H \vec w}{\vec w^\H \Ripn \vec w},
\end{equation}
where 
$
    \Ripn \defeq \sum^K_{k=2} \sigma^2_k \vec a_0(\theta_k) \vec a_0(\theta_k)^\H + \sigma^2_n \mat I_N
$ 
denotes the covariance of the interference signals plus noise. Depending on specific applications, the typical roles of beamforming include the following: to adjust array output SINR \eqref{eq:sinr}, to estimate SoI waveform $\rscl s_1$ using $\rscl y$ \cite{elbir2023twenty,wang2025distributionally}, to estimate SoI power $\sigma^2_1$ using $\E \rscl y \rscl y^\H$ \cite{li2003robust,stoica2003robust}, and to estimate DoAs $\theta_k$ using the directions corresponding to the largest output powers $\E \rscl y \rscl y^\H$ \cite[p.~1019]{johnson1982application}.

The MVDR beamformer in direction $\theta$ solves the following beamforming problem: 
\begin{equation}\label{eq:mvdr}
    \begin{array}{cl}
      \displaystyle \min_{\vec w}  &  \vec w^\H \Ripn \vec w \\
      \st  &  \vec w^\H \vec a_0(\theta) = 1.
    \end{array}
\end{equation}
The MVDR beamformer is optimal in many senses: to achieve maximum output SINR \eqref{eq:sinr} and to attain maximum likelihood estimate of SoI $\rscl s_1$ given $\rvec x$ when $\theta \defeq \theta_1$ \cite[Eq.~(20)]{johnson1982application}. In practice, however, the IPN covariance $\Ripn$ is unknown, and alternatively, the following MPDR (i.e., Capon) beamforming problem that minimizes the array output power $\E \rscl y \rscl y^\H$ in direction $\theta$ is solved:
\begin{equation}\label{eq:capon}
    \begin{array}{cl}
      \displaystyle \min_{\vec w}  &  \vec w^\H \mat R_x \vec w \\
      \st  &  \vec w^\H \vec a_0(\theta) = 1,
    \end{array}
\end{equation}
where $\mat R_x \defeq \E_{\rvec x \sim \Px}[\rvec x \rvec x^\H]$ denotes the covariance of received signal $\rvec x$ and $\Px$ its underlying true distribution; note that $\mat R_x$ can be estimated using collected snapshots. In what follows, we use $\vec a_0$ as a shorthand for $\vec a_0(\theta)$, if no ambiguity is caused. Among existing beamformers, when the snapshot covariance $\mat R_x$ and the array steering vector $\vec a_0$ are (almost) exactly known, the MPDR beamformer is most popular for its excellent performance and real-world operationality. To be specific, the MPDR beamformer is optimal in the sense of MPDR waveform estimation \cite{lorenz2005robust}, \cite[Eq.~(4)]{elbir2023twenty}, MPDR power estimation \cite[Eq.~(7)]{li2003robust}, high-resolution DoA estimation \cite{capon1969high,johnson1982application}, and SINR maximization \cite[p.~1540]{kim2008robust}, \cite[Eq.~(10)]{vorobyov2013principles}. In addition, the snapshot covariance can be estimated using the array input data, while the IPN covariance in the MVDP beamforming is difficult to obtain for the SoI-contaminated case. However, if the SoI $\rscl s_1$ is absent in the snapshot $\rvec x$, the snapshot covariance $\mat R_x$ matches the IPN covariance $\Ripn$, and the MPDR beamformer becomes the MVDR beamformer. 

Hereafter, according to whether the MPDR or MVDR beamforming scheme is used, we let $\mat R_0$ denote either the snapshot covariance $\mat R_x$ or the IPN covariance $\Ripn$. 
Let $\math R$, $\math R_x$, $\Ripnh$, and $\vech a$ denote the estimates of $\mat R_0$, $\mat R_x$, $\Ripn$, and $\vec a_0$, respectively.

\subsection{Existing Works}\label{subsec:existing-works}
To address the uncertainty in $\math R$ compared to its true value $\mat R_0$, typical treatments include 1) eigenvalue thresholding, 2) diagonal loading, 3) prior-knowledge embedding, 4) worst-case optimization, and 5) IPN covariance reconstruction.

\textit{Eigenvalue Thresholding}: Let $(\lambda_1, \lambda_2, \ldots, \lambda_N)$ be eigenvalues of $\math R$ in descending order and $\math U$ contains eigenvectors. Define $\math R_{\text{thr}} \defeq
\math U \mat \Lambda_{\text{thr}} \math U^\H$ and 
\begin{equation}\label{eq:eig-thres}
     \mat \Lambda_{\text{thr}} \defeq 
\left[ 
\begin{array}{cccc}
    \lambda_1 & & & \\
    & \max\{\mu \lambda_1, \lambda_2\}& & \\
    & & \ddots & \\
    & & & \max\{\mu \lambda_1, \lambda_N\}
\end{array}
\right]
\end{equation}
where $0 \le \mu \le 1$; NB: when $\mu = 0$, $\math R_{\text{thr}}$ reduces to $\math R$. The eigenvalue thresholding method to combat the uncertainty in $\math R$ is to use $\math R_{\text{thr}}$ in beamforming \cite[Eq.~(36)]{harmanci2000relationships}, \cite[Eq.~(12)]{lorenz2005robust}.

\textit{Diagonal Loading}: The diagonal loading method to combat the uncertainty in $\math R$ is to use $\math R + \epsilon \mat I_N$ in beamforming, where $\epsilon \ge 0$ is a scalar \cite{carlson1988covariance,mestre2005finite,zhang2016robust,de2018loading}. Different tuning principles for $\epsilon$ have been discussed, including main beam correction and side lobe reduction \cite{carlson1988covariance,guerci1999theory}, asymptotic properties \cite{mestre2005finite}, empirical mean-squared error (MSE) minimization \cite{stoica2008using}, SoI power estimation \cite{zhang2016robust}, etc. An interesting observation is that some robust methods against the uncertainty in the assumed steering vector $\vech a$ can be shown as the diagonal loading method, where $\epsilon$ is determined by the uncertainty degree of $\vech a$ \cite{vorobyov2003robust,li2003robust} or the sensitivity measure of the array \cite{cox1987robust,rubsamen2011robust}. In practice, however, the best tuning method for $\epsilon$ is still trial-and-error because the best value under one criterion challenges the optimality (or even satisfaction) under the other criterion.

\textit{Prior-Knowledge Embedding}: The prior-knowledge embedding method to combat the uncertainty in $\math R$ is to use $\alpha \math R + \beta \matb R$ in beamforming, for some weight coefficients $\alpha, \beta \ge 0$ and prior knowledge $\matb R$ of $\mat R_0$ \cite{stoica2008using}. It is believed that $\alpha \math R + \beta \matb R$ can provide a better estimate than $\math R$ and $\matb R$, in the sense of smaller MSE. The coefficients $\alpha$ and $\beta$ can be tuned using empirical MSE minimization \cite{stoica2008using}.

\textit{Worst-Case Optimization}:
Robust beamforming in the sense of minimax optimization is formulated as \cite{vorobyov2003robust,shahbazpanahi2003robust,lorenz2005robust,huang2023robust}
\begin{equation}\label{eq:robust-capon}
    \begin{array}{cl}
      \displaystyle \min_{\vec w} \max_{\mat R \in \cal U_R}  &  \vec w^\H \mat R \vec w \\
      \st  & \displaystyle \min_{\vec a \in \cal U_a}  \vec w^\H \vec a \vec a^\H \vec w \ge 1,
    \end{array}
\end{equation}
where $\cal U_R$ and $\cal U_a$ are the uncertainty sets for $\math R$ and $\vech a$, respectively; $\math R, \mat R_0 \in \cal U_R$ and $\vech a, \vec a_0 \in \cal U_a$. The philosophy behind \eqref{eq:robust-capon} is to \textit{optimize the worst-case performance and guarantee the worst-case feasibility}. 
When 
$
\cal U_R \defeq \{\mat R:~\|\mat R - \math R\|_F \le \epsilon_1\}
$ 
and 
$
\cal U_a \defeq \{\vec a:~\|\vec a - \vech a\|_2 \le \epsilon_2\}
$ 
for some $\epsilon_1, \epsilon_2 \ge 0
$ where $\|\cdot\|_F$ and $\|\cdot\|_2$ denote the matrix Frobenius norm and the vector $2$-norm, respectively, Problem \eqref{eq:robust-capon} can be equivalently transformed into \cite[Eq.~(29)]{vorobyov2003robust}, \cite[Eq.~(13)]{huang2023robust}
\begin{equation}\label{eq:robust-capon-explicit}
    \begin{array}{cl}
      \displaystyle \min_{\vec w} &  \vec w^\H (\math R + \epsilon_1 \mat I_N) \vec w \\
      \st  & \vec w^\H \vech a \ge \vec \epsilon_2 \|\vec w\|_2 + 1,
    \end{array}
\end{equation}
which is tantamount to the diagonal loading method in terms of the objective function. Problem \eqref{eq:robust-capon-explicit} is equivalent, in the sense of the same optimal cost, to
\begin{equation}\label{eq:robust-capon-explicit-2}
    \begin{array}{cl}
      \displaystyle \min_{\vec w} &  \vec w^\H (\math R + \epsilon_1 \mat I_N) \vec w \\
      \st  & |\vec w^\H \vech a| \ge \vec \epsilon_2 \|\vec w\|_2 + 1,
    \end{array}
\end{equation}
because if $\vec w^*$ is an optimal solution to \eqref{eq:robust-capon-explicit-2}, so is $\vec w^* e^{j\varphi}$ for any angle $\varphi$; $j$ denotes the imaginary unit; note that $\vec w^\H \vec w$ is rotation-invariant. For other proposals of $\cal U_a$, see, e.g., \cite[p.~871]{wu1999new}, \cite[p.~2408]{li2004doubly}, \cite[Eq.~(13)]{lorenz2005robust}, \cite[Eq.~(25)]{khabbazibasmenj2012robust}, \cite[Eq.~(10)]{gu2012robust}, \cite[p.~221]{huang2023robust}. Note that the robustification in terms of $\mat R$ and $\vec a$ can be independently conducted; cf. \eqref{eq:robust-capon}. Under the MVDR beamforming scheme, which occurs when the SoI is not included in the snapshots or when the IPN covariance $\Ripn$ is estimated using Methods 3a) or 3b) \cite{gu2012robust,mohammadzadeh2020maximum,ruan2013robust}, another interpretation of \eqref{eq:robust-capon} is to maximize the worst-case SINR \cite{kim2008robust,huang2023robust}; i.e., $\max_{\vec w} \min_{\vec a, \mat R} h(\vec w, \vec a, \mat R)$; cf. \eqref{eq:sinr}.

\textit{IPN Covariance Reconstruction}: The motivation behind this method is that the MVDR beamformer tends to be more robust than the MPDR beamformer \cite{ehrenberg2010sensitivity}. Hence, in the SoI-contaminated case, a natural way is to estimate the IPN covariance using snapshots \cite{gu2012robust,mohammadzadeh2020maximum,ruan2013robust,ruan2016robust}. In the typical treatment, the key step is to estimate the power spectra $P(\theta)$ and then reconstruct the IPN covariance as follows
\begin{equation}\label{eq:IPN}
    \Ripnh = \int_{\bar \Theta} P(\theta) \vec a_0(\theta) \vec a^\H_0(\theta) \d \theta,
\end{equation}
where $\bar \Theta$ is the angular sector that excludes the uncertainty region of the SoI's DoA \cite{gu2012robust,mohammadzadeh2020maximum}. In real-world operation, $P(\theta)$ is estimated using the Capon spectra \cite{gu2012robust} or the maximum-entropy spectra \cite{mohammadzadeh2020maximum}, and $\vec a_0(\theta)$ is estimated using $\vech a(\theta)$.

\subsection{Specific Research Questions}\label{subsec:research-questions}
From the review of existing works in Subsection \ref{subsec:existing-works}, we can see that the following questions have not been answered.
\begin{enumerate}[\text{Q}1)]
    \item According to \cite[p.~1365]{cox1987robust}, \cite[p.~313]{vorobyov2003robust}, \cite[p.~742]{rubsamen2011robust}, and our intuition, the well accepted notion of \quotemark{robust beamforming} means that the beamformer is insensitive to possible perturbations in signal characteristics. However, how does the worst-case optimization method in \eqref{eq:robust-capon} reflect this notion in terms of the uncertainty in $\math R$?

    \item Can we show that the existing eigenvalue thresholding method and the prior-knowledge embedding method are also tantamount to the worst-case optimization method \eqref{eq:robust-capon}, as the diagonal loading method is?

    \item In IPN covariance reconstruction, the resolution and bias of spectra estimation is crucial, especially for suppressing the closely located interferers. Therefore, how can we incorporate the higher-resolution and lower-bias spectra estimation methods, such as MUSIC, into IPN covariance reconstruction?
\end{enumerate}

The above three questions are representative particularizations of identified Issues I1)-I4) in Subsection \ref{subsec:problem-statements}. Note that Questions Q1) and Q2) apply to both MPDR and MVDR beamforming.

\section{Distributionally Robust Adaptive Beamforming}\label{sec:robustness-theory}
This section studies the formalized theory of distributionally robust (adaptive) beamforming, with a focus on combating the uncertainties in the estimated snapshot or IPN covariances. In particular, the concept of \quotemark{\textit{distributional robustness}} is quantitatively defined and the countermeasure methods are proposed. The concept of distributional robustness is noted because we are working with the probabilistic uncertainties in snapshots or IPN signals. This stochasticity feature fundamentally differs from the fixed (albeit unknown) uncertainty in the assumed steering vector $\vech a$.

In view of potential errors in the assumed steering vector $\vech a$, we consider the feasible set of beamformers $\vec w$ as
\begin{equation}\label{eq:W-def}
    \cal W \defeq \Big\{\vec w:~\displaystyle \min_{\vec a \in \cal U_a}  \vec w^\H \vec a \vec a^\H \vec w \ge 1 \Big\},
\end{equation}
for a given uncertainty set $\cal U_a$ of the steering vector. However, to focus on the uncertainty in $\math R$, the uncertainty in $\vech a$ is minimally examined in this article.

\subsection{Issue of Distributional Uncertainty}

We revisit the distributional form of beamforming
\begin{equation}\label{eq:capon-true}
    \begin{array}{cl}
      \displaystyle \min_{\vec w}  &  \vec w^\H \mat \E_{\rvec s \sim \Po}[\rvec s \rvec s^\H] \vec w \\
      \st  &  \vec w^\H \vec a_0 = 1,
    \end{array}
\end{equation}
where $\rvec s$ denotes the snapshot $\rvec x$ or the IPN signal under the MPDR and MVDR beamforming, respectively, and $\Po$ the underlying true distribution of $\rvec s$. Since $\Po$ is unavailable in practice, the empirical distribution 
\begin{equation}
    \Ph \defeq \frac{1}{L} \sum^L_{l = 1} \delta_{\vec s_l}
\end{equation}
constructed using $L$ collected samples $\{\vec s_1, \vec s_2, \ldots, \vec s_L\}$ can serve as an estimate of $\Po$ where $\delta_{\vec s}$ denotes the point-mass distribution centered at $\vec s$. As a result, the sample-average approximation (SAA) of \eqref{eq:capon-true} can be written as
\begin{equation}\label{eq:capon-saa}
    \begin{array}{cl}
      \displaystyle \min_{\vec w}  &  \vec w^\H \mat \E_{\rvec s \sim \Ph}[\rvec s \rvec s^\H] \vec w \\
      \st  &  \vec w^\H \vec a_0 = 1.
    \end{array}
\end{equation}
When $\cal U_a$ contains only $\vec a_0$,  
\begin{equation}\label{eq:capon-true-W}
      \displaystyle \min_{\vec w \in \cal W}  \vec w^\H \mat \E_{\rvec s \sim \Po}[\rvec s \rvec s^\H] \vec w
\end{equation}
reduces to \eqref{eq:capon-true} and
\begin{equation}\label{eq:capon-saa-W}
      \displaystyle \min_{\vec w \in \cal W}  \vec w^\H \mat \E_{\rvec s \sim \Ph}[\rvec s \rvec s^\H] \vec w
\end{equation}
reduces to \eqref{eq:capon-saa}. Since $\Ph$ is distributionally uncertain compared to $\Po$, directly employing \eqref{eq:capon-saa-W} as a surrogate of \eqref{eq:capon-true-W} is questionable due to the issue of \quotemark{overfitting on data}.

In beamforming, as only the second moments of $\Ph$ and $\Po$ are involved, for presentation simplicity, we directly work on $\math R \defeq \E_{\rvec s \sim \Ph}[\rvec s \rvec s^\H]$ and $\mat R_0 \defeq \E_{\rvec s \sim \Po}[\rvec s \rvec s^\H]$ whenever possible. Only when technically necessary, we investigate $\Ph$ and $\Po$. Another benefit of this treatment is that, under the MVDR beamforming scheme for SoI-contaminated cases, the IPN signal $\rvec s$ cannot be directly observed but its covariance can still be estimated \cite{gu2012robust,ruan2016robust}.

\subsection{Definition of Distributional Robustness}\label{subsec:robustness-def}
Motivated by the philosophical notion of robustness, we study the formal definitions of distributional robustness. We begin with the concept of local distributional robustness.

\begin{definition}[Locally Distributionally Robust Beamformer]\label{def:local-robust-beamformer}
A beamformer $\vec w^*$ is called $(\epsilon, k)$-locally-robust on the uncertainty set $\cal B_{\epsilon}(\mat R_0) \defeq \{\mat R:~d(\mat R, \mat R_0) \le \epsilon\}$ if 
\begin{equation}\label{eq:local-robustness}
    \vec w^{*\H} \mat R \vec w^*  - \vec w_0^\H  \mat R_0 \vec w_0 \le k,~~~~~\forall \mat R \in \cal B_\epsilon(\mat R_0),
\end{equation}
where
$
    \vec w_0 \in \min_{\vec w \in \cal W} \vec w^\H \mat R_0 \vec w
$ 
is an optimal beamformer associated with $\mat R_0$; $d$ is a matrix similarity measure. The smallest value $k^*$ of $k \ge 0$ satisfying \eqref{eq:local-robustness} is called the \textit{local robustness measure} of $\vec w^*$ at $\epsilon$.
\stp
\end{definition}

Definition \ref{def:local-robust-beamformer} means that when the real-world operating covariance $\mat R$ deviates from the underlying true $\mat R_0$, the performance degradation $\vec w^{*\H} \mat R \vec w^*  - \vec w_0^\H  \mat R_0 \vec w_0$ at the robust beamformer $\vec w^*$ is upper bounded by $k$ and $k^*$. This formalism straightforwardly reflects the notion of robustness, that is, the insensitivity in terms of perturbations in the covariance matrix. Note that the smaller the value of $k$ and $k^*$, the smaller the performance degradation under uncertainties, and therefore, the more robust the beamformer $\vec w^*$ is. 

Definition \ref{def:local-robust-beamformer} is natural if the deviation level $\epsilon$ is known, which, however, is not always the case in practice. Hence, the concept of global distributional robustness can be motivated.

\begin{definition}[Globally Distributionally Robust Beamformer]\label{def:global-robust-beamformer}
A beamformer $\vec w^*$ is called $(\tau, k)$-globally-robust on the whole space $\C^{N \times N}$ if 
\begin{equation}\label{eq:global-robustness}
    \vec w^{*\H} \mat R \vec w^*  - \tau \le k \cdot d(\mat R, \mat R_0),~~~~~\forall \mat R \in \C^{N \times N},
\end{equation}
where $\tau$ is a prescribed cost threshold satisfying
$
\vec w_0^\H  \mat R_0 \vec w_0 \le \tau < \infty. 
$ 
The smallest value $k^*$ of $k \ge 0$ satisfying \eqref{eq:global-robustness} is called the \textit{global robustness measure} of $\vec w^*$ at $\tau$.
\stp
\end{definition}

Definition \ref{def:global-robust-beamformer} means that when the real-world operating covariance $\mat R$ deviates from the underlying true $\mat R_0$, the performance degradation $\vec w^{*\H} \mat R \vec w^*  - \vec w_0^\H  \mat R_0 \vec w_0$ at the robust beamformer $\vec w^*$ is upper bounded by $\tau - \vec w_0^\H  \mat R_0 \vec w_0 + k \cdot d(\mat R, \mat R_0)$. For modeling flexibility, we do not restrict $\tau \equiv \vec w_0^\H  \mat R_0 \vec w_0$.

Hereafter in this article, for presentation brevity, we shorthand \quotemark{distributional robustness} simply as \quotemark{robustness}, if no ambiguity is caused. In addition, we do not explicitly mention whether the referred robustness is local or global; this can be straightforwardly inferred from context.

\subsection{Locally Distributionally Robust Beamforming}\label{subsec:local-robust-beamforming}
In line with Definition \ref{def:local-robust-beamformer} and its motivation, we propose the following robust beamforming formulation
\begin{equation}\label{eq:robust-bf-R0}
    \begin{array}{cl}
       (\vec w^*, k^*) = & \displaystyle \argmin_{\vec w \in \cal W, k} ~~ k \\
        \st & \vec w^\H \mat R \vec w  - \vec w_0^\H  \mat R_0 \vec w_0 \le k,~~~\forall \mat R \in \cal B_{\epsilon}(\mat R_0), \\
        & k \ge 0,
    \end{array}
\end{equation}
which finds the locally robust beamformer $\vec w^*$ and the local robustness measure $k^*$. 
In practice where $\mat R_0$ is inaccessible, we can solve \eqref{eq:robust-bf-R0} resorting to
\begin{equation}\label{eq:robust-bf-Rh}
    \begin{array}{cl}
       \displaystyle \min_{\vec w \in \cal W, k}  & k \\
        \st & \vec w^\H \mat R \vec w  - \vech w^\H \math R \vech w \le k,~~~\forall \mat R \in \cal B_{\delta}(\math R), \\
        & k \ge 0,
    \end{array}
\end{equation}
where $\math R$ is an estimate of $\mat R_0$ and 
\begin{equation}\label{eq:wh}
\vech w \in \min_{\vec w \in \cal W} \vec w^\H \math R \vec w
\end{equation}
is an optimal beamformer associated with $\math R$. The intuition is that if $\mat R_0 \in \cal B_{\delta}(\math R)$ for some $\delta \ge 0$, then there exists $\epsilon \ge 0$ such that $\math R \in \cal B_{\epsilon}(\mat R_0)$, and vice versa. As a result, $\math R$ is feasible to \eqref{eq:robust-bf-R0}, so is $\mat R_0$ to \eqref{eq:robust-bf-Rh}.

The theorem below reformulates Problems \eqref{eq:robust-bf-R0} and \eqref{eq:robust-bf-Rh}.
\begin{theorem}\label{thm:robust-bf-R0}
    Problem \eqref{eq:robust-bf-R0} is equivalent to 
    \begin{equation}\label{eq:minimax-robust-bf-R0}
           \displaystyle \min_{\vec w \in \cal W} \max_{\mat R \in \cal B_{\epsilon}(\mat R_0)}  \vec w^\H \mat R \vec w,
    \end{equation}
in the sense of the following: if $(\vec w^*, k^*, \mat R^*)$ solves \eqref{eq:robust-bf-R0}, then $(\vec w^*, \mat R^*)$ solves \eqref{eq:minimax-robust-bf-R0}; if $(\vec w^*, \mat R^*)$ solves \eqref{eq:minimax-robust-bf-R0}, by constructing $k^* \defeq \vec w^{*\H} \mat R^* \vec w^*  - \vec w_0^\H  \mat R_0 \vec w_0$, then $(\vec w^*, k^*, \mat R^*)$ solves \eqref{eq:robust-bf-R0}. In addition, Problem \eqref{eq:robust-bf-Rh} is equivalent to
\begin{equation}\label{eq:minimax-robust-bf-Rh}
       \displaystyle \min_{\vec w \in \cal W} \max_{\mat R \in \cal B_{\delta}(\math R)} \vec w^\H \mat R \vec w.
\end{equation}
\end{theorem}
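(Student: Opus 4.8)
The plan is to eliminate the auxiliary variable $k$, reduce the semi-infinite constraint to its worst-case form, and then recognize that the remaining objective is merely a constant shift of the minimax objective. First I would fix $\vec w \in \cal W$ and observe that the constraint $\vec w^\H \mat R \vec w - \vec w_0^\H \mat R_0 \vec w_0 \le k$ holding for \emph{all} $\mat R \in \cal B_{\epsilon'}(\mat R_0)$ is equivalent to the single worst-case inequality $g(\vec w) - c \le k$, where I write $g(\vec w) \defeq \max_{\mat R \in \cal B_{\epsilon'}(\mat R_0)} \vec w^\H \mat R \vec w$ for the inner maximum and $c \defeq \vec w_0^\H \mat R_0 \vec w_0$ for the optimization-independent constant. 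The inner maximum is attained because $\cal B_{\epsilon'}(\mat R_0)$ in \eqref{eq:uncertainty-set-R} is compact and $\mat R \mapsto \vec w^\H \mat R \vec w$ is continuous; the maximizer furnishes the $\mat R^*$ appearing in the solution triple. Minimizing $k$ for this fixed $\vec w$ subject to $k \ge g(\vec w) - c$ and $k \ge 0$ then yields the value function $k^*(\vec w) = \max\{0,\, g(\vec w) - c\}$.

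The key step, and the main obstacle, is to show that the truncation at $0$ is never active, i.e. that $g(\vec w) - c \ge 0$ for every $\vec w \in \cal W$, so that $k^*(\vec w) = g(\vec w) - c$ and the constraint $k \ge 0$ is redundant. This hinges on the center lying in its own ball: since $\mat R_0 \in \cal B_{\epsilon'}(\mat R_0)$, we have $g(\vec w) \ge \vec w^\H \mat R_0 \vec w$, and by the definition \eqref{eq:w0} of $\vec w_0$ as a minimizer of $\vec w^\H \mat R_0 \vec w$ over $\cal W$, the right-hand side is at least $c = \min_{\vec w \in \cal W} \vec w^\H \mat R_0 \vec w$. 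Hence $g(\vec w) - c \ge 0$ on all of $\cal W$, the $\max\{0,\cdot\}$ collapses, and \eqref{eq:robust-bf-R0} reduces to $\min_{\vec w \in \cal W}\, \big(g(\vec w) - c\big)$. Because $c$ is constant, its subtraction leaves the argmin unchanged, so the minimizers of this problem coincide exactly with those of the minimax problem \eqref{eq:minimax-robust-bf-R0}, namely $\min_{\vec w \in \cal W} g(\vec w)$. I would stress that without the center-in-ball property the truncation could flatten the objective on the region where $g(\vec w) < c$, and the two argmin sets might then differ.

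With this reduction in hand, the stated two-way correspondence follows directly. If $(\vec w^*, k^*, \mat R^*)$ solves \eqref{eq:robust-bf-R0}, then $\vec w^*$ minimizes $g(\vec w) - c$, hence $g$, while $\mat R^*$ must attain the inner maximum at $\vec w^*$ (it is the binding realization of the worst-case constraint), so $(\vec w^*, \mat R^*)$ solves \eqref{eq:minimax-robust-bf-R0}. Conversely, given a minimax solution $(\vec w^*, \mat R^*)$, setting $k^* \defeq \vec w^{*\H} \mat R^* \vec w^* - \vec w_0^\H \mat R_0 \vec w_0 = g(\vec w^*) - c \ge 0$ produces a triple that is feasible for \eqref{eq:robust-bf-R0} and, by the value-function identity above, optimal.

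Finally, the equivalence of \eqref{eq:robust-bf-Rh} and \eqref{eq:minimax-robust-bf-Rh} is obtained by repeating the argument verbatim with $\math R$ in place of $\mat R_0$, the ball $\cal B_{\delta}(\math R)$ in place of $\cal B_{\epsilon'}(\mat R_0)$, the beamformer $\vech w$ from \eqref{eq:wh} in place of $\vec w_0$, and the constant $\vech w^\H \math R \vech w$ in place of $c$. The only fact invoked is again that the center $\math R$ belongs to $\cal B_{\delta}(\math R)$, which guarantees the analogous nonnegativity $g(\vec w) - \vech w^\H \math R \vech w \ge 0$ and hence the identical reduction.
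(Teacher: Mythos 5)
Your proposal is correct and follows essentially the same route as the paper's proof: replace the semi-infinite constraint by its worst-case form, eliminate $k$, and drop the constant $\vec w_0^\H \mat R_0 \vec w_0$. You are in fact slightly more careful than the paper at the one delicate point, namely that the nonnegativity $g(\vec w) - c \ge 0$ (which makes the constraint $k \ge 0$ redundant) requires both the center-in-ball property $\mat R_0 \in \cal B_{\epsilon'}(\mat R_0)$ and the minimality of $\vec w_0$ over $\cal W$ from \eqref{eq:w0}, whereas the paper cites only the former.
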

\begin{proof}
See Appendix \ref{append:robust-bf-R0}.
\stp
\end{proof}

Theorem \ref{thm:robust-bf-R0} explains why minimax beamforming formulation is robust in the sense of Definition \ref{def:local-robust-beamformer}, which casts new insights into the robust beamforming community. Specifically, Theorem \ref{thm:robust-bf-R0} validates the rationale behind the worst-case optimization \eqref{eq:robust-capon}; see also \cite[Eq.~(9)]{kim2008robust}, \cite[Eq.~(46)]{shahbazpanahi2003robust}, and \cite[Eq.~(10)]{huang2023robust}. This answers Question Q1) in Subsection \ref{subsec:research-questions}.

Considering practicality, we particularly focus on the solution of \eqref{eq:minimax-robust-bf-Rh}.
\begin{theorem}\label{thm:sol-minimax-robust-bf-Rh}
If there exists $\mat R^* \in \cal B_{\delta}(\math R)$ such that $\mat R^* \succeq \mat R$ for all $\mat R \in \cal B_{\delta}(\math R)$, then 
\[
       \displaystyle \displaystyle \min_{\vec w \in \cal W} \max_{\mat R \in \cal B_{\delta}(\math R)} \vec w^\H \mat R \vec w = \min_{\vec w \in \cal W} \vec w^\H \mat R^* \vec w.
\]
\end{theorem}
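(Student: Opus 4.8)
The plan is to show that the assumed maximal element $\mat R^*$ simultaneously solves the inner maximization for \emph{every} fixed beamformer $\vec w \in \cal W$, which immediately collapses the minimax problem into the claimed single minimization. The entire argument rests on the elementary fact that quadratic forms respect the Löwner (positive semidefinite) ordering.

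First I would fix an arbitrary $\vec w \in \cal W$ and invoke the hypothesis $\mat R^* \succeq \mat R$ for all $\mat R \in \cal B_{\delta}(\math R)$. By the definition of $\succeq$, the difference $\mat R^* - \mat R$ is positive semidefinite, hence $\vec w^\H (\mat R^* - \mat R) \vec w \ge 0$, i.e.,
\[
\vec w^\H \mat R \vec w \le \vec w^\H \mat R^* \vec w, \quad \forall\, \mat R \in \cal B_{\delta}(\math R).
\]
This exhibits $\vec w^\H \mat R^* \vec w$ as an upper bound for the inner objective over the entire uncertainty ball.

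Next I would note that this upper bound is actually \emph{attained}: since the hypothesis guarantees $\mat R^* \in \cal B_{\delta}(\math R)$, the value $\vec w^\H \mat R^* \vec w$ is both an upper bound on and a member of the set $\{\vec w^\H \mat R \vec w : \mat R \in \cal B_{\delta}(\math R)\}$. Therefore, for each fixed $\vec w \in \cal W$,
\[
\max_{\mat R \in \cal B_{\delta}(\math R)} \vec w^\H \mat R \vec w = \vec w^\H \mat R^* \vec w.
\]
Because this identity holds pointwise in $\vec w$, taking the minimum over $\vec w \in \cal W$ on both sides yields the stated equality.

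I do not expect a genuine obstacle here: the argument is a one-line consequence of PSD monotonicity of quadratic forms. The only point worth flagging is that the hypothesis conveniently packages both the \emph{existence} of a dominating $\mat R^*$ and its \emph{membership} in $\cal B_{\delta}(\math R)$, so no separate compactness or attainment argument is needed to guarantee that the inner supremum is achieved. Verifying that such a maximal $\mat R^*$ exists for concrete choices of the uncertainty set (e.g.\ a spectral-norm or Frobenius ball) is a separate matter handled elsewhere and is not required for this statement.
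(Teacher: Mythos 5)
Your proof is correct and complete. The paper itself offers no written argument for this theorem beyond the remark that it ``can be proven by contradiction,'' so your direct derivation is, if anything, more informative than what the paper provides. The two key observations you isolate --- that the L\"owner ordering $\mat R^* \succeq \mat R$ implies $\vec w^\H \mat R \vec w \le \vec w^\H \mat R^* \vec w$ pointwise in $\vec w$, and that the bound is attained because $\mat R^* \in \cal B_{\delta}(\math R)$ --- are exactly what make the inner maximization collapse for every fixed $\vec w$, after which minimizing both sides of a pointwise identity over $\cal W$ is immediate. A contradiction argument along the paper's suggested lines would have to assume the two optimal values differ and then exhibit either a feasible $\mat R$ beating $\mat R^*$ in the inner problem or a $\vec w$ violating optimality, which ultimately reduces to the same two facts you use directly; your constructive route avoids that detour and makes the role of each hypothesis (existence of a dominating element \emph{and} its membership in the ball) explicit. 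Your closing remark correctly delimits the scope: verifying that such a maximal element exists for particular uncertainty sets is the content of Examples \ref{ex:diagonal-loading}, \ref{ex:regularization}, and \ref{ex:eig-thres}, not of this theorem.
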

\begin{proof}
This can be proven by contradiction.
\stp
\end{proof}

Specific applications of Theorem \ref{thm:sol-minimax-robust-bf-Rh} are given as follows, where eigenvalue thresholding, diagonal loading, regularization, and prior-knowledge embedding are shown to be robust; this answers Question Q2) in Subsection \ref{subsec:research-questions}.

\begin{example}[Eigenvalue Thresholding]\label{ex:eig-thres}
Consider 
\[
\cal B_{\mu}(\math R_{\text{thr}, \mu}) \defeq \{\mat R:~\mat 0 \preceq \mat R \preceq \math R_{\text{thr}, \mu}\},
\]
where $\math R_{\text{thr}, \mu}$ is defined in \eqref{eq:eig-thres}. There exists $\mat R^* \defeq \math R_{\text{thr}, \mu}$ such that $\mat R^* \succeq \mat R$ for all $\mat R \in \cal B_{\mu}(\math R_{\text{thr}, \mu})$. Hence, \eqref{eq:minimax-robust-bf-Rh} is particularized into
\begin{equation}\label{eq:et-obj}
\displaystyle \min_{\vec w \in \cal W} \vec w^\H \math R_{\text{thr}, \mu} \vec w,
\end{equation}
which is the eigenvalue-thresholding beamforming method \cite[Eq.~(12)]{lorenz2005robust}, \cite{harmanci2000relationships}.
\stp
\end{example}

\begin{example}[Diagonal Loading]\label{ex:diagonal-loading}
Consider 
\begin{equation}\label{eq:BDL-uncertainty-set}
\cal B_{\epsilon_1}(\math R) \defeq \{\mat R:~\math R - \epsilon_1 \mat I_N \preceq \mat R \preceq \math R + \epsilon_1\mat I_N,~\mat R \succeq \mat 0\},
\end{equation}
for $\epsilon_1 \ge 0$. There exists $\mat R^* \defeq \math R + \epsilon_1 \mat I_N$ such that $\mat R^* \succeq \mat R$ for all $\mat R \in \cal B_{\epsilon_1}(\math R)$. Hence, \eqref{eq:minimax-robust-bf-Rh} is particularized into
\begin{equation}\label{eq:dl-obj}
\displaystyle \min_{\vec w \in \cal W} \vec w^\H (\math R + \epsilon_1 \mat I_N) \vec w,
\end{equation}
which is the diagonal-loading beamforming method. In addition, we consider
\[
\cal U_a \defeq \{\vec a:~ \vech a \vech a^\H - \epsilon_2 \mat I_N \preceq \vec a \vec a^\H \preceq \vech a \vech a^\H + \epsilon_2 \mat I_N\},
\]
for $\epsilon_2 \ge 0$. Similarly, the constraint in $\cal W$ [see \eqref{eq:W-def}] can be explicitly expressed as 
\begin{equation}\label{eq:dl-constrj}
  \vec w^\H (\vech a \vech a^\H - \epsilon_2 \mat I_N) \vec w \ge 1,  
\end{equation}
that is, 
$
\cal W = \big\{\vec w:~\vec w^\H \vech a \vech a^\H \vec w \ge \epsilon_2 \vec w^\H\vec w + 1 \big\}
$. 
This is reminiscent of existing robust beamforming formulation \eqref{eq:robust-capon-explicit-2} in the literature; however, they are slightly distinct because square roots are involved in the constraint of \eqref{eq:robust-capon-explicit-2}.
\stp
\end{example}

Example \ref{ex:diagonal-loading} gives the diagonal loading method another robustness interpretation against the uncertainty in $\math R$ under the scheme of worst-case optimization, which is technically different from the results in \cite[Eq.~(46)]{shahbazpanahi2003robust}, \cite[Eq.~(10)]{huang2023robust}. This new interpretation brings new insights to the community. The motivation of constructing $\cal B_{\epsilon_1}(\math R)$ and $\cal U_a$ as in Example \ref{ex:diagonal-loading} is straightforward: we just assume that the matrix difference $\mat R - \math R$ is two-sided limited, so is the difference $\vec a \vec a^\H - \vech a \vech a^\H$. In \cite[Eq.~(51)]{shahbazpanahi2003robust}, \eqref{eq:dl-obj} is called positive diagonal loading, while \eqref{eq:dl-constrj} is negative diagonal loading. However, \cite{shahbazpanahi2003robust} obtains \eqref{eq:dl-obj} and \eqref{eq:dl-constrj} in a way technically different from our treatments as in Theorem \ref{thm:sol-minimax-robust-bf-Rh}. Example \ref{ex:diagonal-loading} can be generalized as follows.

\begin{example}[Regularization]\label{ex:regularization}
Consider 
\[
\cal B_{\epsilon_1}(\math R) \defeq \{\mat R:~\math R - \epsilon_1 \mat C_1 \preceq \mat R \preceq \math R + \epsilon_1 \mat C_1,~\mat R \succeq \mat 0\}
\]
and 
\[
\cal U_a \defeq \{\vec a:~ \vech a \vech a^\H - \epsilon_2 \mat C_2 \preceq \vec a \vec a^\H \preceq \vech a \vech a^\H + \epsilon_2 \mat C_2\},
\]
for Hermitian matrices $\mat C_1, \mat C_2 \succeq \mat 0$ and $\epsilon_1, \epsilon_2 \ge 0$. Problem \eqref{eq:minimax-robust-bf-Rh} is particularized into
\begin{equation}\label{eq:regularization-obj}
\displaystyle \min_{\vec w \in \cal W} \vec w^\H (\math R + \epsilon_1 \mat C_1) \vec w,
\end{equation}
where 
$
\cal W = \big\{\vec w:~\vec w^\H \vech a \vech a^\H \vec w \ge \epsilon_2 \vec w^\H \mat C_2 \vec w + 1 \big\}
$. 
This is a general regularized beamforming formulation.
\stp
\end{example}

\begin{example}[Prior-Knowledge Embedding]\label{ex:prior-knowledge}
Model \eqref{eq:regularization-obj} is equivalent, in the sense of the same optimal beamformer(s), to
\begin{equation}\label{eq:prior-knowledge-obj}
\displaystyle \min_{\vec w \in \cal W} \vec w^\H (\alpha \math R + \beta \mat C_1) \vec w,
\end{equation}
where $\alpha \defeq 1/(1 + \epsilon_1)$, $\beta \defeq \epsilon_1/(1 + \epsilon_1)$, and $\mat C_1$ can be seen as prior knowledge of the unknown $\mat R_0$. This gives the prior-knowledge embedding method \cite{stoica2008using}.
\stp
\end{example}

In addition to Examples \ref{ex:eig-thres}-\ref{ex:prior-knowledge}, another benefit of using Theorem \ref{thm:sol-minimax-robust-bf-Rh} can be seen in robust beamforming based on IPN covariance reconstruction; see Section \ref{sec:doa-estimation} later. In short, we can tailor the uncertainty set $\cal B_{\delta}(\math R)$ to achieve high-resolution power spectra estimation for better accuracy of IPN covariance reconstruction, especially when interferers are close to the SoI.

\subsection{Globally Distributionally Robust Beamforming}\label{subsec:global-robust-beamforming}
In line with Definition \ref{def:global-robust-beamformer} and its motivation, the globally robust beamforming problem can be proposed as
\begin{equation}\label{eq:robust-bf-R0-global}
    \begin{array}{cl}
       (\vec w^*, k^*) = &\displaystyle \argmin_{\vec w \in \cal W, k} ~~ k \\
        \st & \vec w^\H \mat R \vec w  - \tau \le k \cdot d(\mat R, \mat R_0),~\forall \mat R \in \C^{N \times N}, \\
        & k \ge 0,
    \end{array}
\end{equation}
which finds the globally robust beamformer $\vec w^*$ and the global robustness measure $k^*$. 
In real-world operation, \eqref{eq:robust-bf-R0-global} can be solved resorting to
\begin{equation}\label{eq:robust-bf-Rh-global}
    \begin{array}{cl}
       \displaystyle \min_{\vec w \in \cal W, k}  & k \\
        \st & \vec w^\H \mat R \vec w  - \tau \le k \cdot d(\mat R, \math R),~\forall \mat R \in \C^{N \times N}, \\
        & k \ge 0,
    \end{array}
\end{equation}
for 
\begin{equation}\label{eq:tau-h}
    \vech w^\H  \math R \vech w \le \tau < \infty.
\end{equation}
Note that $\math R$ is feasible to \eqref{eq:robust-bf-R0-global}, so is $\mat R_0$ to \eqref{eq:robust-bf-Rh-global}. Considering practicality, we particularly focus on the solution of \eqref{eq:robust-bf-Rh-global}. Let $\mat C$ be a Hermitian invertible weight matrix. We study three cases:
\begin{itemize}
    \item $d(\mat R, \math R) \defeq \Tr[\mat R - \math R]^\H[\mat R - \math R]$.
    
    \item $d(\mat R, \math R) \defeq \vec w^\H \Tr[\mat R - \math R]^\H[\mat R - \math R] \vec w$, $\forall \vec w \in \C^N$.

    \item $d(\mat R, \math R) \defeq \vec w^\H \Tr[\mat R - \math R]^\H \mat C^{-1} [\mat R - \math R] \vec w$, $\forall \vec w \in \C^N$.
\end{itemize}
In all cases, $d(\mat R, \math R)$ is a similarity measure between $\mat R$ and $\math R$: 1) $d(\mat R, \math R) \ge 0$ for every $\mat R$ and $\math R$; 2) $d(\mat R, \math R) = 0$ if and only if $\mat R = \math R$; 3) $d(\mat R, \math R) = d(\math R, \mat R)$. We choose $d(\mat R, \math R)$ in such ways just for technical tractability compared to, e.g., $d(\mat R, \math R) \defeq \|\mat R - \math R\|_F = \sqrt{\Tr[\mat R - \math R]^\H[\mat R - \math R]}$. Under the above three constructions for $d(\mat R, \math R)$, the solution to Problem \eqref{eq:robust-bf-Rh-global} is given below.

\begin{theorem}\label{thm:sol-robust-bf-Rh-global}
If $d(\mat R, \math R) \defeq \Tr[\mat R - \math R]^\H[\mat R - \math R]$, Problem \eqref{eq:robust-bf-Rh-global} is equivalent, in the sense of the same optimal beamformer(s), to a quartically regularized beamforming problem
\begin{equation}\label{eq:sol-global-robustness-1}
\min_{\vec w \in \cal W} \vec w^\H \left[\math R + \frac{\vec w \vec w^\H}{4k}\right] \vec w = \min_{\vec w \in \cal W} \vec w^\H \math R \vec w + \frac{1}{4k} (\vec w^\H \vec w)^2,
\end{equation}
where $k$ is chosen to let the above optimal objective value equal to $\tau$.
If $d(\mat R, \math R) \defeq \vec w^\H \Tr[\mat R - \math R]^\H[\mat R - \math R] \vec w$, Problem \eqref{eq:robust-bf-Rh-global} is equivalent, in the sense of the same optimal beamformer(s), to a quadratically regularized (i.e., diagonal-loading) beamforming problem
\begin{equation}\label{eq:sol-global-robustness-2}
\min_{\vec w \in \cal W} \vec w^\H \left[\math R + \frac{1}{4k} \mat I_N \right] \vec w = \min_{\vec w \in \cal W} \vec w^\H \math R \vec w + \frac{1}{4k} \vec w^\H \vec w,
\end{equation}
where $k$ is chosen to let the above optimal objective value equal to $\tau$.
If $d(\mat R, \math R) \defeq \vec w^\H \Tr[\mat R - \math R]^\H \mat C^{-1} [\mat R - \math R] \vec w$, Problem \eqref{eq:robust-bf-Rh-global} is equivalent, in the sense of the same optimal beamformer(s), to a general quadratically regularized beamforming problem
\begin{equation}\label{eq:sol-global-robustness-3}
\min_{\vec w \in \cal W} \vec w^\H \left[\math R + \frac{1}{4k} \mat C \right] \vec w = \min_{\vec w \in \cal W} \vec w^\H \math R \vec w + \frac{1}{4k} \vec w^\H \mat C \vec w,
\end{equation}
where $k$ is chosen to let the above optimal objective value equal to $\tau$.
\end{theorem}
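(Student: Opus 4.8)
The plan is to handle all three cases with one template: first collapse the semi-infinite family of constraints in \eqref{eq:robust-bf-Rh-global} into a single constraint, and then resolve the implicit coupling between the scalar $k$ and the threshold $\tau$. For fixed $\vec w$ and $k \ge 0$, requiring $\vec w^\H \mat R \vec w - \tau \le k \cdot d(\mat R, \math R)$ for every admissible $\mat R$ is equivalent to $\max_{\mat R}\,[\vec w^\H \mat R \vec w - k\cdot d(\mat R,\math R)] \le \tau$. Substituting $\mat \Delta \defeq \mat R - \math R$ and using $\vec w^\H \mat R \vec w = \vec w^\H \math R \vec w + \vec w^\H \mat \Delta \vec w$, this becomes $\vec w^\H \math R \vec w + \max_{\mat \Delta}\,[\,\mathrm{Re}(\vec w^\H \mat \Delta \vec w) - k\cdot d(\mat \Delta)\,] \le \tau$, so everything reduces to evaluating the inner maximum in closed form for each choice of $d$.

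To evaluate that maximum I would write $\vec w^\H \mat \Delta \vec w = \Tr[\vec w \vec w^\H \mat \Delta]$ and invoke the Cauchy--Schwarz inequality in the Frobenius inner product. In the first case $d(\mat \Delta) = \Tr[\mat \Delta^\H \mat \Delta] = \|\mat \Delta\|_F^2$, so $\mathrm{Re}(\vec w^\H \mat \Delta \vec w) - k\|\mat \Delta\|_F^2 \le \|\vec w \vec w^\H\|_F \|\mat \Delta\|_F - k\|\mat \Delta\|_F^2$, whose maximum over $\|\mat \Delta\|_F$ is $\|\vec w \vec w^\H\|_F^2/(4k) = (\vec w^\H \vec w)^2/(4k)$, attained at the rank-one $\mat \Delta = \vec w \vec w^\H/(2k)$ via $\Tr[(\vec w \vec w^\H)^2] = (\vec w^\H \vec w)^2$; the reduced constraint $\vec w^\H \math R \vec w + (\vec w^\H \vec w)^2/(4k) \le \tau$ is exactly ``regularized objective $\le \tau$'' for \eqref{eq:sol-global-robustness-1}. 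The second case is identical with $k$ replaced by $k(\vec w^\H \vec w)$, yielding the bound $(\vec w^\H \vec w)/(4k)$ and hence \eqref{eq:sol-global-robustness-2}. The third case follows after the substitution $\mat \Delta = \mat C^{1/2}\mat B$, which turns $d$ into $(\vec w^\H \vec w)\|\mat B\|_F^2$ and the linear term into $\Tr[\vec w \vec w^\H \mat C^{1/2}\mat B]$; Cauchy--Schwarz then gives $(\vec w^\H \mat C \vec w)/(4k)$ using $\Tr[\mat C^{1/2}(\vec w \vec w^\H)^2 \mat C^{1/2}] = (\vec w^\H \vec w)(\vec w^\H \mat C \vec w)$, reproducing \eqref{eq:sol-global-robustness-3}.

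With the constraint reduced to $\vec w^\H \math R \vec w + r(\vec w)/(4k) \le \tau$, where $r(\vec w)\ge 0$ is the case-specific regularizer, the outer problem is $\min_{\vec w \in \cal W,\,k\ge 0} k$ subject to this inequality. For fixed $\vec w$ with $\vec w^\H \math R \vec w < \tau$ the smallest admissible $k$ makes the constraint tight, so $k(\vec w) = r(\vec w)/[\,4(\tau - \vec w^\H \math R \vec w)\,]$ and the original optimal value is $\min_{\vec w\in\cal W} k(\vec w)$, attained at some $\vec w^*$ with optimum $k^*$. I would then prove both inclusions. Optimality of $k^*$ as a minimum rearranges into $\vec w^\H \math R \vec w + r(\vec w)/(4k^*) \ge \tau$ for all $\vec w \in \cal W$, with equality at $\vec w^*$, so $\vec w^*$ minimizes the regularized objective at $k=k^*$ and its optimal value is exactly $\tau$. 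Conversely, any minimizer of the regularized problem at this $k^*$ attains equality, whence $k(\cdot)=k^*$ and it is optimal for \eqref{eq:robust-bf-Rh-global}. This is precisely the asserted equivalence ``in the sense of the same optimal beamformer(s),'' with $k$ fixed so that the regularized optimum equals $\tau$.

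I expect the main obstacle to be twofold. First, because $\mat R$ (hence $\mat \Delta$) ranges over $\C^{N\times N}$, the scalar $\vec w^\H \mat \Delta \vec w$ is generally complex, so the scalar inequality must be read through its real part; the Cauchy--Schwarz route is what resolves this cleanly, since the bound is achieved by the explicit rank-one matrix $\mat \Delta \propto \vec w \vec w^\H$ (resp.\ $\mat \Delta \propto \mat C\vec w \vec w^\H$), for which $\vec w^\H \mat \Delta \vec w$ is real and positive, and which also certifies that $k>0$ is necessary for a finite inner maximum. Second, the $k$--$\tau$ equivalence hinges on the nondegeneracy $\vec w^\H \math R \vec w < \tau$ over a nonempty subset of $\cal W$; I would state explicitly that this holds whenever $\tau$ strictly exceeds $\vech w^\H \math R \vech w$, which is the content of \eqref{eq:tau-h} together with \eqref{eq:wh}, and treat the boundary case $\tau = \min_{\vec w\in\cal W}\vec w^\H\math R\vec w$ separately as a degenerate limit.
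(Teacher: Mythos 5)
Your proposal is correct and follows essentially the same route as the paper: collapse the semi-infinite constraint into the single inner maximization $\max_{\mat R}\,[\vec w^\H \mat R \vec w - k\,d(\mat R,\math R)] \le \tau$, solve that maximum in closed form to obtain the regularized objective, and then pin down $k$ through the requirement that the regularized optimum equal $\tau$. The only (harmless) differences are that you evaluate the inner maximum by Cauchy--Schwarz in the Frobenius inner product rather than by the first-order stationarity condition $2k(\mat R - \math R) = \vec w\vec w^\H$ used in the paper --- which has the minor side benefit of making explicit that the inequality must be read through $\mathrm{Re}(\vec w^\H \mat \Delta \vec w)$ when $\mat R$ ranges over all of $\C^{N\times N}$ --- and that you certify the $k$--$\tau$ equivalence via the explicit formula $k(\vec w) = r(\vec w)/[4(\tau - \vec w^\H\math R\vec w)]$ where the paper instead argues by continuity and monotonicity of the regularized optimum in $k$.
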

\begin{proof}
See Appendix \ref{append:sol-robust-bf-Rh-global}.
\stp
\end{proof}

Since $\min_{\vec w} \vec w^\H [\math R + \frac{\vec w \vec w^\H}{4k}] \vec w$, $\min_{\vec w} \vec w^\H [\math R + \frac{1}{4k} \mat I_N] \vec w$, and $\min_{\vec w} \vec w^\H [\math R + \frac{1}{4k} \mat C] \vec w$ are continuous and monotonically decreasing in $k$ and they tend to infinity when $k \to 0$, the solutions to the three feasibility problems in Theorem \ref{thm:sol-robust-bf-Rh-global} exist. If we require $\tau > \vech w^\H  \math R \vech w$, the solutions are guaranteed to be finite (i.e., finite $k$'s exist). Note that when $\tau \defeq \vech w^\H  \math R \vech w$, we have $k = \infty$ for all the three cases. An intuitive example of Theorem \ref{thm:sol-robust-bf-Rh-global} is given as follows.
\begin{example}[Diagonal-Loading Beamforming]\label{ex:sol-global-robustness-2}
Let $\cal U_a \defeq \{\vec a_0\}$, that is, the steering vector is exactly known; cf. \eqref{eq:W-def}. Then the optimal objective of \eqref{eq:sol-global-robustness-2} is a function of $k$, i.e.,
\[
\varphi(k) \defeq \frac{1}{\vec a_0^\H \left[\math R + \frac{1}{4k} \mat I_N \right]^{-1} \vec a_0}.
\]
Let 
\[
    \tau \defeq \frac{1}{\vec a_0^\H \math R^{-1} \vec a_0} + t
\]
where $t \ge 0$ is a user-design objective-excess parameter; cf. \eqref{eq:tau-h}. Then, the optimal value of $k$ is given by the zero of the following equation
\[
\varphi(k) = \frac{1}{\vec a_0^\H \left[\math R + \frac{1}{4k} \mat I_N \right]^{-1} \vec a_0} = \frac{1}{\vec a_0^\H \math R^{-1} \vec a_0} + t = \tau;
\]
that is, $k$ is uniquely determined by $t$. If $t \defeq 0$, we have $k = \infty$.
\stp
\end{example}

As seen from Examples \ref{ex:diagonal-loading} and \ref{ex:sol-global-robustness-2} and Theorem \ref{thm:sol-robust-bf-Rh-global}, diagonal loading is a powerful technique for both locally robust beamforming and globally robust beamforming; this claim is true at least under decent algorithmic constructions. The difference is on which quantity the level of diagonal loading relies: the scale $\epsilon_1$ of the uncertainty set in locally robust beamforming (cf. Example \ref{ex:diagonal-loading}) or the prescribed threshold $\tau$ in globally robust beamforming (cf. Theorem \ref{thm:sol-robust-bf-Rh-global} and Example \ref{ex:sol-global-robustness-2}).

\subsection{Regularized Beamforming}\label{subsec:regularized-beamforming}

Existing literature has well-identified regularized beamforming as an efficient method to achieve robustness against limited snapshot sizes \cite{vorobyov2013principles,elbir2023twenty}, which can also be seen from this article's results in \eqref{eq:dl-obj}, \eqref{eq:regularization-obj}, \eqref{eq:sol-global-robustness-1}, \eqref{eq:sol-global-robustness-2}, and \eqref{eq:sol-global-robustness-3}. Another interpretation of regularized beamforming comes with penalizing the array sensitivity against array errors \cite{cox1987robust,rubsamen2011robust}. To be specific, the quantity $\vec w^\H \vec w$ is used as a sensitivity measure of the array; see \cite[Section~II]{cox1987robust} and \cite[Section~III-C]{rubsamen2011robust} for technical justifications. Intuitively speaking, when the steering vector $\vec a(\theta)$ deviates from its actual value $\vec a_0(\theta)$, the smaller the value of $\vec w^\H \vec w$, the less impact this deviation can cause to beamforming results; note that the derivative of array response pattern $\vec w^\H \vec a$ with respect to steering vector $\vec a$ is $\vec w$; that is, if the norm of $\vec w$ is small, the sensitivity of $\vec w^\H \vec a$ to deviations of $\vec a$ would be also limited. 

In this subsection, we introduce the regularization technique from a \textit{different} perspective, i.e., worst-case robustness. We start with the SAA beamforming \eqref{eq:capon-saa} with the noise injection technique \cite[Section~IV]{jablon1986adaptive}, \cite[Eq.~(13)]{pan2018noise} to achieve robustness. Suppose that the signal $\rvec s$ is contaminated by an unknown zero-mean random error $\rvec \xi_1$ (due to model uncertainties), and the operating array steering vector is contaminated by an unknown zero-mean random error $\rvec \xi_2$ (due to, e.g., calibration and pointing errors). The beamforming problem with noise injection into $\rvec s$ can be formulated as
\begin{equation}\label{eq:capon-stochastic}
    \begin{array}{cl}
      \displaystyle \min_{\vec w}  &  \vec w^\H \mat \E_{(\rvec s \sim \Ph,~\rvec \xi_1 \sim \P_1)}[(\rvec s + \rvec \xi_1) (\rvec s + \rvec \xi_1)^\H] \vec w \\
      
      \st  &  \vec w^\H \vech a \vech a^\H \vec w - \vec w^\H \E_{\rvec \xi_2 \sim \P_2}[ \rvec \xi_2 \rvec \xi_2^\H] \vec w \ge 1,
    \end{array}
\end{equation}
where we suppose that the distributions of $\rvec \xi_1$ and $\rvec \xi_2$ are $\P_1$ and $\P_2$, respectively; the objective is due to $\rvec \xi_1$-noise injection into $\rvec s$, while the constraint is due to the facts $\vech a = \vec a_0 + \rvec \xi_2$ and $\vec w^\H \vec a_0 \vec a_0^\H \vec w \ge 1$. Assume that $\rvec s$ is uncorrelated with $\rvec \xi_1$. Problem \eqref{eq:capon-stochastic} can be rewritten as
\begin{equation}\label{eq:capon-stochastic-R}
    \begin{array}{cl}
      \displaystyle \min_{\vec w}  &  \vec w^\H [\math R + \mat R_1] \vec w \\
      \st  &  \vec w^\H \vech a \vech a^\H \vec w - \vec w^\H \mat R_2 \vec w \ge 1,
    \end{array}
\end{equation}
where $\mat R_1 \defeq \E_{\rvec \xi_1 \sim \P_1} [\rvec \xi_1 \rvec \xi_1^\H]$ and $\mat R_2 \defeq \E_{\rvec \xi_2 \sim \P_2} [\rvec \xi_2 \rvec \xi_2^\H]$. In practice where we have no knowledge about $\mat R_1$ and $\mat R_2$, we can construct the uncertainty sets for $\mat R_1$ and $\mat R_2$ respectively as follows:
\begin{equation}\label{eq:uncerainty-set-R}
    \begin{array}{l}
        \mat R_1 \in \{\mat R:~\mat 0 \preceq \mat R \preceq \epsilon_1 \mat I_N\}, \\
        \mat R_2 \in \{\mat R:~\mat 0 \preceq \mat R \preceq \epsilon_2 \mat I_N\},
    \end{array}
\end{equation}
for some $\epsilon_1, \epsilon_2 \ge 0$. As a result, the robust counterpart of \eqref{eq:capon-stochastic-R}, which optimizes the worst-case performance and guarantees the worst-case feasibility, can be given as
\begin{equation}\label{eq:capon-robust-R}
    \begin{array}{cl}
      \displaystyle \min_{\vec w} \max_{\mat R_1}  &  \vec w^\H [\math R + \mat R_1] \vec w \\
      \st  &  \vec w^\H \vech a \vech a^\H \vec w - \displaystyle \max_{\mat R_2} \vec w^\H \mat R_2 \vec w \ge 1.
    \end{array}
\end{equation}
The solution to the above problem is given below.
\begin{corollary}[of Theorem \ref{thm:sol-minimax-robust-bf-Rh}]\label{cor:sol-capon-robust-R}
    Problem \eqref{eq:capon-robust-R} is equivalent to
    \begin{equation}\label{eq:sol-capon-robust-R}
        \begin{array}{cl}
          \displaystyle \min_{\vec w}  &  \vec w^\H [\math R + \epsilon_1 \mat I_N] \vec w \\
          \st  &  \vec w^\H \vech a \vech a^\H \vec w - \epsilon_2 \vec w^\H \vec w \ge 1,
        \end{array}
    \end{equation}
    which coincides with Example \ref{ex:diagonal-loading}.  \stp
\end{corollary}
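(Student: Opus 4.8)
The plan is to prove the corollary by treating the objective and the constraint of Problem \eqref{eq:capon-robust-R} separately, since both inner optimizations share the same elementary structure: the quadratic form $\vec w^\H \mat R \vec w$ attains its maximum over a Loewner order interval $\{\mat R:~\mat 0 \preceq \mat R \preceq c \mat I_N\}$ at the top endpoint $\mat R = c\mat I_N$. Indeed, for any $\vec w$ and any $\mat R$ in such an interval, $c\mat I_N - \mat R \succeq \mat 0$ yields $\vec w^\H \mat R \vec w \le c\,\vec w^\H \vec w$, with equality at $\mat R = c\mat I_N$. The objective reduction will then be a direct invocation of Theorem \ref{thm:sol-minimax-robust-bf-Rh}, while the constraint reduction uses the same dominance fact applied to the worst-case feasibility set.

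For the objective, I would write $\vec w^\H[\math R + \mat R_1]\vec w = \vec w^\H \math R \vec w + \vec w^\H \mat R_1 \vec w$ and identify the effective uncertainty set of the aggregate matrix $\mat R \defeq \math R + \mat R_1$ as $\{\math R + \mat R_1:~\mat 0 \preceq \mat R_1 \preceq \epsilon_1 \mat I_N\}$. This set possesses the dominating element $\mat R^* \defeq \math R + \epsilon_1 \mat I_N$, because $\mat R^* - (\math R + \mat R_1) = \epsilon_1\mat I_N - \mat R_1 \succeq \mat 0$ for every admissible $\mat R_1$. Hence the hypothesis of Theorem \ref{thm:sol-minimax-robust-bf-Rh} is met, and the inner maximization is solved at $\mat R^*$, turning the objective into $\min_{\vec w} \vec w^\H [\math R + \epsilon_1 \mat I_N]\vec w$.

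For the constraint, I would argue that the worst-case feasibility requirement $\vec w^\H \vech a \vech a^\H \vec w - \vec w^\H \mat R_2 \vec w \ge 1$ for every $\mat R_2$ with $\mat 0 \preceq \mat R_2 \preceq \epsilon_2 \mat I_N$ is binding precisely at the matrix that maximizes $\vec w^\H \mat R_2 \vec w$. By the dominance fact above, $\max_{\mat 0 \preceq \mat R_2 \preceq \epsilon_2 \mat I_N} \vec w^\H \mat R_2 \vec w = \epsilon_2\,\vec w^\H \vec w$, attained at $\mat R_2 = \epsilon_2\mat I_N$. Substituting this value collapses the feasible set to $\{\vec w:~\vec w^\H \vech a \vech a^\H \vec w - \epsilon_2 \vec w^\H \vec w \ge 1\}$. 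Since Problems \eqref{eq:capon-robust-R} and \eqref{eq:sol-capon-robust-R} now share an identical objective and an identical feasible region, they have the same optimal beamformer(s) and optimal value, which establishes the equivalence and its coincidence with Example \ref{ex:diagonal-loading}.

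I do not anticipate a genuine obstacle here; the only point requiring care is that Theorem \ref{thm:sol-minimax-robust-bf-Rh} as stated governs the min--max \emph{objective} and not the robust constraint, so the constraint reduction must be carried out by hand rather than simply quoted. Fortunately it rests on the very same Loewner-order dominance inequality, so no separate machinery is needed; the remaining work is purely the bookkeeping of substituting the two maximizers $\mat R_1 = \epsilon_1 \mat I_N$ and $\mat R_2 = \epsilon_2 \mat I_N$ and verifying that the resulting problem is literally \eqref{eq:sol-capon-robust-R}.
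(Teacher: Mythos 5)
Your proposal is correct and follows essentially the same route the paper intends: the paper offers no separate proof, treating the corollary as immediate from Theorem \ref{thm:sol-minimax-robust-bf-Rh} applied to the objective (with dominating element $\math R + \epsilon_1 \mat I_N$) together with the elementary evaluation $\max_{\mat 0 \preceq \mat R_2 \preceq \epsilon_2 \mat I_N} \vec w^\H \mat R_2 \vec w = \epsilon_2 \vec w^\H \vec w$ in the constraint, exactly as you do. Your explicit remark that the theorem governs only the min--max objective, so the constraint reduction must be done by hand via the same Loewner dominance, is a fair and correct clarification of what the paper leaves implicit.
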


Corollary \ref{cor:sol-capon-robust-R} and \eqref{eq:sol-capon-robust-R} reveal an important law in robust beamformer design: to achieve robustness is to add regularization. This law applies to both the objective and the constraint. To be specific, compared with Example \ref{ex:diagonal-loading} and Corollary \ref{cor:sol-capon-robust-R}, the insight below is immediate.
\begin{insight}[Robustness and Regularization]\label{insight:regularization}
Consider the SAA beamforming and its robust counterparts. The following two statements are at the core.
\begin{itemize}
    \item Solving $\min_{\vec w} \vec w^\H \math R \vec w$ by $\vech w$ at nominal value $\math R$ does not necessarily control true value $\vech w^\H \mat R_0 \vech w$ evaluated at $\mat R_0$. However, solving the regularized problem $\min_{\vec w} \vec w^\H [\math R + \epsilon_1 \mat I_N] \vec w$ by $\vech w'$ can control true value $\vech w^{\prime \H} \mat R_0 \vech w'$ evaluated at $\mat R_0$. This is because 
    \[
        \vec w^\H \mat R_0 \vec w \le \vec w^\H [\math R + \epsilon_1 \mat I_N] \vec w,~~~\forall \vec w.
    \]
    Hence, minimizing the regularized objective means minimizing the upper bound of the true objective function.
    \item Requiring $\vec w^\H \vech a \vech a^\H \vec w \ge 1$ does not necessarily imply $\vec w^\H \vec a_0 \vec a^\H_0 \vec w \ge 1$. However, if we require the regularized version, i.e., $\vec w^\H \vech a \vech a^\H \vec w - \epsilon_2 \vec w^\H \vec w \ge 1$, then we can guarantee that $\vec w^\H \vec a_0 \vec a^\H_0 \vec w \ge 1$ because
    \[
        \vec w^\H \vec a_0 \vec a^\H_0 \vec w \ge \vec w^\H \vech a \vech a^\H \vec w - \epsilon_2 \vec w^\H \vec w \ge 1,~~~\forall \vec w.
    \]
\end{itemize}
In short, for both objective and constraint, to achieve robustness is to add regularization.
\stp
\end{insight}

The following corollary solves Problem \eqref{eq:capon-robust-R} when the uncertainty sets for $\mat R_1$ and $\mat R_2$ take other forms than \eqref{eq:uncerainty-set-R}.
\begin{corollary}[of Corollary \ref{cor:sol-capon-robust-R}]\label{cor:sol-capon-robust-R-2}
If
\[
    \begin{array}{l}
        \mat R_1 \in \{\mat R:~\mat 0 \preceq \mat R \preceq \epsilon_1 \matb R_1\} \\
        \mat R_2 \in \{\mat R:~\mat 0 \preceq \mat R \preceq \epsilon_2 \matb R_2\}
    \end{array}
\] 
for some $\matb R_1, \matb R_2 \succeq \mat 0$, then Problem \eqref{eq:capon-robust-R} is equivalent to
    \begin{equation}\label{eq:sol-capon-robust-R-2}
        \begin{array}{cl}
          \displaystyle \min_{\vec w}  &  \vec w^\H [\math R + \epsilon_1 \matb R_1] \vec w \\
          \st  &  \vec w^\H \vech a \vech a^\H \vec w - \epsilon_2 \vec w^\H \matb R_2 \vec w \ge 1,
        \end{array}
    \end{equation}
which is a general regularized beamforming formulation than diagonal loading in Corollary \ref{cor:sol-capon-robust-R}; see also Example \ref{ex:regularization}.
 \stp
\end{corollary}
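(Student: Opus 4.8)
The plan is to resolve the two inner maximizations in \eqref{eq:capon-robust-R} separately and in closed form, exactly along the lines of the proof of Corollary \ref{cor:sol-capon-robust-R}, the only change being that the dominating element of each uncertainty set is now $\epsilon_1 \matb R_1$ (resp.\ $\epsilon_2 \matb R_2$) rather than $\epsilon_1 \mat I_N$ (resp.\ $\epsilon_2 \mat I_N$). The driving mechanism is Theorem \ref{thm:sol-minimax-robust-bf-Rh}: whenever an uncertainty set possesses a greatest element in the L\"owner (positive-semidefinite) order, the inner maximization of the quadratic form $\vec w^\H \mat R \vec w$ is attained at that single element \emph{simultaneously} for every $\vec w$, so the $\max$ can be stripped off and replaced by evaluation at that element.

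First I would treat the objective. Writing $\mat R \defeq \math R + \mat R_1$, the effective uncertainty set for the objective is $\{\mat R:~\math R \preceq \mat R \preceq \math R + \epsilon_1 \matb R_1\}$, whose greatest element is $\mat R^* \defeq \math R + \epsilon_1 \matb R_1$. Feasibility of $\mat R^*$ within this set is immediate since $\epsilon_1 \matb R_1 \succeq \mat 0$, and dominance holds because $\mat R \preceq \mat R^*$ means $\mat R^* - \mat R \succeq \mat 0$, hence $\vec w^\H \mat R \vec w \le \vec w^\H \mat R^* \vec w$ for every $\vec w$. Invoking Theorem \ref{thm:sol-minimax-robust-bf-Rh} then collapses the inner maximization and yields the objective $\vec w^\H [\math R + \epsilon_1 \matb R_1] \vec w$.

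Next I would treat the constraint. For fixed $\vec w$, the term $\max_{\mat R_2} \vec w^\H \mat R_2 \vec w$ is taken over $\{\mat R:~\mat 0 \preceq \mat R \preceq \epsilon_2 \matb R_2\}$; by the identical pointwise domination argument, namely $\mat R_2 \preceq \epsilon_2 \matb R_2 \Rightarrow \vec w^\H \mat R_2 \vec w \le \epsilon_2 \vec w^\H \matb R_2 \vec w$ with equality attained at $\mat R_2 = \epsilon_2 \matb R_2 \succeq \mat 0$, the maximum equals $\epsilon_2 \vec w^\H \matb R_2 \vec w$. Substituting this into the worst-case feasibility constraint $\vec w^\H \vech a \vech a^\H \vec w - \max_{\mat R_2}\vec w^\H \mat R_2 \vec w \ge 1$ gives $\vec w^\H \vech a \vech a^\H \vec w - \epsilon_2 \vec w^\H \matb R_2 \vec w \ge 1$, which together with the collapsed objective reproduces \eqref{eq:sol-capon-robust-R-2}.

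The argument is routine and presents no genuine obstacle; the single point deserving care is verifying that each candidate dominating matrix actually lies inside its uncertainty set, which here reduces precisely to the standing hypotheses $\matb R_1, \matb R_2 \succeq \mat 0$ (so that $\epsilon_1 \matb R_1$ and $\epsilon_2 \matb R_2$ are admissible upper endpoints). As a sanity check, setting $\matb R_1 = \matb R_2 = \mat I_N$ recovers Corollary \ref{cor:sol-capon-robust-R} and Example \ref{ex:diagonal-loading}, confirming consistency.
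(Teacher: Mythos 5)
Your proposal is correct and matches the paper's (implicit) argument: the corollary is stated as a consequence of Corollary \ref{cor:sol-capon-robust-R} and Theorem \ref{thm:sol-minimax-robust-bf-Rh}, and the intended proof is exactly the pointwise L\"owner-domination collapse of both inner maximizations at the greatest elements $\epsilon_1 \matb R_1$ and $\epsilon_2 \matb R_2$, which you carry out carefully, including the membership check. No gaps.
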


The benefit of using Corollary \ref{cor:sol-capon-robust-R-2} can be seen in robust beamforming based on IPN covariance reconstruction; see Section \ref{sec:doa-estimation} later. In short, we can design a good regularizer matrix $\matb R_1$ to achieve high-resolution power spectra estimation for better accuracy of IPN covariance reconstruction, especially when interferers are close to the SoI.

\subsection{Bayesian-Nonparametric Beamforming}\label{subsec:bayesian-beamforming}
In this subsection, we address the uncertainty in Problem \eqref{eq:capon-saa-W}, compared to Problem \eqref{eq:capon-true-W}, from the perspective of Bayesian nonparametrics. 

Let the uncertainty set for the distribution of $\rvec s$ be $\cal P$; for example, $\cal P$ can be constructed as a distributional ball $\cal B_\epsilon(\Ph)$ centered at $\Ph$. Since $\Ph$ is not a reliable surrogate of $\Po$, we assign a probability distribution $\Q$ on the measurable space $(\cal P, \cal B_{\cal P})$ where $\cal B_{\cal P}$ is the Borel $\sigma$-algebra on $\cal P$. As a result, the Bayesian-nonparametric counterpart of Problem \eqref{eq:capon-saa-W} can be given as
\begin{equation}\label{eq:capon-bayesian}
      \displaystyle \min_{\vec w \in \cal W}  \vec w^\H \E_{\P \sim \Q} \E_{\rvec s \sim \P}[\rvec s \rvec s^\H] \vec w,
\end{equation}
where $\Q$ is called a second-order probability distribution and $\P$ is a first-order one. Consider the measurable space $(\C^N, \cal B_{\C^N})$ where $\cal B_{\C^N}$ denotes the Borel $\sigma$-algebra on $\C^N$. For every event $\cal E \in \cal B_{\C^N}$, the random probability distribution $\P$ is a realization of $\Q$ and the quantity $\P(\cal E)$ is a random variable taking values on $[0, 1]$; the distribution of the random variable $\P(\cal E)$ is determined by $\Q$. A desired $\Q$ should let $\P$ concentrate around $\Po$; the more concentrated, the better. Formulation \eqref{eq:capon-bayesian} enables assigning a distribution on the covariance matrix and studying the following type of robust beamforming problem
$$
\min_{\vec w \in \cal W} \E_{\rmat R \sim \P_{\rmat R}} \vec w^\H \rmat R \vec w,
$$
where $\rmat R \defeq \E_{\rvec s \sim \P}[\rvec s \rvec s^\H]$ and $\P_{\rmat R}$ is its distribution; for instance, see \cite{huang2022robust,irani2025sinr}. 

In this article, we consider $\Q$ to be a Dirichlet process \cite{ferguson1973bayesian} with base distribution $\Pb \in \cal P$ and parameter $\alpha \ge 0$. Here, $\Pb$ serves as a prior estimate of $\Po$ (cf. the sample estimate $\Ph$). Using $\Q$ as a $(\Pb, \alpha)$-Dirichlet process means that for any finite $M$-partition $(\Xi_1, \Xi_2, \ldots, \Xi_M)$ of $\C^N$, the random vector $(\P(\Xi_1), \P(\Xi_2), \ldots, \P(\Xi_M))$ is distributed according to the Dirichlet distribution whose concentration parameter vector is $(\alpha\Pb(\Xi_1), \alpha\Pb(\Xi_2), \ldots, \alpha\Pb(\Xi_M))$. As a result of \cite[Chapter~3]{ghosal2017fundamentals}, Problem \eqref{eq:capon-bayesian} can be equivalently transformed to
\begin{equation}\label{eq:capon-bayesian-R}
      \displaystyle \min_{\vec w \in \cal W}  \vec w^\H \left[ \frac{L}{L + \alpha} \math R + \frac{\alpha}{L + \alpha} \matb R \right] \vec w,
\end{equation}
where $\matb R \defeq \E_{\rvec s \sim \Pb} [\rvec s \rvec s^\H]$ is the prior knowledge. Intuitively speaking, in \eqref{eq:capon-bayesian-R}, we expect the combined probability distribution $[\frac{L}{L + \alpha} \Ph + \frac{\alpha}{L + \alpha} \Pb]$ to be a better estimate of $\Po$ than $\Ph$ and $\Pb$, so is $[ \frac{L}{L + \alpha} \math R + \frac{\alpha}{L + \alpha} \matb R]$ to $\mat R_0$ than $\math R$ and $\matb R$. 

Model \eqref{eq:capon-bayesian-R} gives the prior-knowledge embedding method \cite{stoica2008using} another rigorous interpretation (cf. Example \ref{ex:prior-knowledge}), which offers new insights to the community. In addition, it is equivalent, in the sense of the same optimal beamformer(s), to
\begin{equation}\label{eq:capon-bayesian-regularized}
      \displaystyle \min_{\vec w \in \cal W}  \vec w^\H [\math R + \epsilon \matb R ] \vec w = \displaystyle \min_{\vec w \in \cal W}  \vec w^\H \math R \vec w + \epsilon \vec w^\H \matb R \vec w,
\end{equation}
where $\epsilon \defeq {\alpha}/{L}$. Model \eqref{eq:capon-bayesian-regularized} is a regularized beamforming problem; cf. Example \ref{ex:regularization} and Corollary \ref{cor:sol-capon-robust-R-2}.

A direct application of Models \eqref{eq:capon-bayesian-R} and \eqref{eq:capon-bayesian-regularized} is as follows.
\begin{example}[Diagonal Loading]\label{ex:diag-loading-bayes}
    When we do not know the prior knowledge $\matb R$, we can set $\matb R \defeq \mat I_N$. Problem \eqref{eq:capon-bayesian-R} becomes 
    \[
        \displaystyle \min_{\vec w \in \cal W}  \vec w^\H \left[ \frac{L}{L + \alpha} \math R + \frac{\alpha}{L + \alpha} \mat I_N \right] \vec w,
    \]
    and Problem \eqref{eq:capon-bayesian-regularized} turns into 
    \[
        \displaystyle \min_{\vec w \in \cal W}  \vec w^\H [\math R + \epsilon \mat I_N ] \vec w.
    \]
    The latter is the diagonal-loading method. \stp
\end{example}

Since $\alpha \ge 0$ is a user-design parameter in real-world operation, we can write \eqref{eq:capon-bayesian-R} more compactly as
\begin{equation}\label{eq:capon-bayesian-compact}
      \displaystyle \min_{\vec w \in \cal W}  \vec w^\H [(1-\beta) \math R + \beta \matb R ] \vec w,
\end{equation}
through employing another user-design parameter $\beta \in [0, 1]$. The following insight presents the robustness of the Bayesian-nonparametric beamforming model \eqref{eq:capon-bayesian-compact}.
\begin{insight}[Bayesian-Nonparametric Beamforming]\label{insight:bayes-model}
Suppose that under $(\matb R, \beta)$, the convex combination $(1-\beta) \math R + \beta \matb R$ is closer to $\mat R_0$ than $\math R$. Then, for $\epsilon_1, \epsilon_2 \ge 0$ such that $(1-\beta) \math R + \beta \matb R \in \cal B_{\epsilon_1}({\mat R_0})$ and $\math R \in \cal B_{\epsilon_2}({\mat R_0})$, we have $\epsilon_1 \le \epsilon_2$. As a result, if $k_1$ solves Problem \eqref{eq:robust-bf-R0} under $\epsilon \defeq \epsilon_1$ and $k_2$ solves Problem \eqref{eq:robust-bf-R0} under $\epsilon \defeq \epsilon_2$, we have $k_1 \le k_2$. Therefore, compared with the case using $\math R$, employing $(1-\beta) \math R + \beta \matb R$ as a surrogate of $\mat R_0$ would generate a more robust beamformer that has a smaller robustness measure; see Definition \ref{def:local-robust-beamformer}. \stp
\end{insight}

In addition to Insight \ref{insight:bayes-model}, another benefit of using Bayesian-nonparametric beamforming method \eqref{eq:capon-bayesian-compact} can be seen in robust beamforming based on IPN covariance reconstruction; see Section \ref{sec:doa-estimation} later. In short, we can design the prior matrix $\matb R$ to achieve high-resolution power spectra estimation for better accuracy of IPN covariance reconstruction, especially when interferers are close to the SoI.

\subsection{Unified Distributionally Robust Beamforming Framework}\label{subsec:unified-framework}
Based on the proposals of locally distributionally robust beamforming \eqref{eq:robust-bf-Rh}, globally distributionally robust beamforming \eqref{eq:robust-bf-Rh-global}, regularized beamforming \eqref{eq:sol-capon-robust-R-2}, and Bayesian-nonparametric beamforming \eqref{eq:capon-bayesian-compact}, the insight below summarizes a unified distributionally robust beamforming framework.
\begin{insight}[Unified Robust Beamforming Framework]\label{insight:unified-framework}
A robust beamforming formulation in the sense of Definition \ref{def:local-robust-beamformer} should take, or can be transformed into, the following forms: 
\[
      \displaystyle \min_{\vec w \in \cal W}  \vec w^\H [(1-\beta) \math R + \beta \matb R ] \vec w
\]
or equivalently
\[
      \displaystyle \min_{\vec w \in \cal W}  \vec w^\H \math R \vec w + \epsilon \vec w^\H \matb R \vec w,
\]
where $\beta \in [0, 1]$ and $\epsilon \defeq \beta/(1-\beta) \ge 0$; the parameters $\beta$, $\epsilon$, and $\matb R$ can be determined according to the principles of locally robust beamforming (i.e., Theorems \ref{thm:robust-bf-R0} and \ref{thm:sol-minimax-robust-bf-Rh}), globally robust beamforming (i.e., Theorem \ref{thm:sol-robust-bf-Rh-global}), regularized beamforming (i.e., Insight \ref{insight:regularization}), or Bayesian-nonparametric beamforming (i.e., Insight \ref{insight:bayes-model}).
\stp
\end{insight}

For detailed motivations and explanations, also recall Examples \ref{ex:diagonal-loading}, \ref{ex:regularization}, \ref{ex:prior-knowledge}, \ref{ex:sol-global-robustness-2}, and \ref{ex:diag-loading-bayes}, Corollaries \ref{cor:sol-capon-robust-R} and \ref{cor:sol-capon-robust-R-2}, and Model \eqref{eq:capon-bayesian-compact}.

\section{Robustness via High-Resolution Spectra Estimation and IPN Covariance Reconstruction}\label{sec:doa-estimation}

This section answers Question Q3) in Subsection \ref{subsec:research-questions}. Since we aim to estimate the power spectra for IPN covariance reconstruction, in this section, $\math R$ refers to the snapshot covariance in which the SoI is present. This leads to the Capon (i.e., MPDR) beamforming and Capon power spectra.

The Capon beamforming is believed to be a good-resolution method for power spectra estimation \cite{johnson1982application}. However, when we move to the robust Capon beamforming regime, the resolution must be sacrificed. This can be straightforwardly seen from the robust counterpart of the Capon constraint \eqref{eq:W-def}, i.e.,
$
    \min_{\vec a(\theta) \in \cal U_a}  \vec w^\H \vec a(\theta) \vec a(\theta)^\H \vec w \ge 1,
$ 
for 
$
    \cal U_a \defeq \{\vec a:~\|\vec a - \vec a_0(\theta)\| \le \epsilon\}
$. 
To be specific, the robust Capon beamforming maintains a wider beam than the standard Capon beamforming in direction $\theta$ to ensure that the array power response $\vec w^\H \vec a_0(\theta') \vec a_0(\theta')^\H \vec w$ is sufficiently large for all directions $\theta'$ close to $\theta$. Note that
$\vec a_0(\theta') \in \cal U_a$ when $|\theta' - \theta|$ is small. In using beamforming for power spectra estimation, the wider the beam, the lower the resolution. Therefore, in the literature, the trade-off between robustness and resolution in beamforming for power spectra estimation largely exists.  
This dilemma occurs even when the steering vector is exactly known but the SAA-estimate $\math R$ deviates from its true value $\mat R_0$. This is because robustness against array errors technically leads to the diagonal-loading Capon beamforming \cite{cox1987robust,vorobyov2003robust,li2003robust}, so does robustness against the uncertainty in $\math R$ (see, e.g., Example \ref{ex:diagonal-loading}). Therefore, whenever the diagonal loading technique is applied to achieve robustness, the resolution for power spectra estimation must deteriorate. In this section, we show how to leverage the characteristics of the subspace methods, such as MUSIC, to improve the resolution of power spectra estimation in robust beamforming. This aim is technically realized by designing an appropriate $\matb R$ in Insight \ref{insight:unified-framework}.

We start with the reformulation of the canonical MUSIC method. Let
\begin{equation}\label{eq:eigen}
   \mat U_0 \mat \Sigma_0 \mat U_0^\H \defeq \mat R_0
\end{equation}
denote an eigenvalue decomposition of $\mat R_0$ and eigenvalues in $\mat \Sigma_0$ be sorted in the descending order. We partition $\mat U_0$ and $\mat \Sigma_0$ as follows:
\[
\mat \Sigma_0 = 
\left[
\begin{array}{cc}
   \mat \Sigma_{0,s}  &  \mat 0\\
   \mat 0          & \mat \Sigma_{0,v}
\end{array}
\right]
\]
where $\mat \Sigma_{0,s} \in \C^{K \times K}$ and $\mat \Sigma_{0,v} \in \C^{(N-K) \times (N-K)}$, and
\[
\mat U_0 = [\mat U_{0,s},~\mat U_{0,v}],
\]
where the columns of $\mat U_{0,s} \in \C^{N \times K}$ span the signal-plus-interference subspace and the columns of $\mat U_{0,v} \in \C^{N \times (N-K)}$ span the noise subspace. Let
\begin{equation}\label{eq:eigen-Rh}
   \math U \math \Sigma \math U^\H \defeq \math R
\end{equation}
denote an eigenvalue decomposition of $\math R$ and eigenvalues in $\math \Sigma$ be sorted in the descending order. Similarly, $\math \Sigma$ is partitioned into $\math \Sigma_s$ and $\math \Sigma_v$, so is $\math U$ into $\math U_{s}$ and $\math U_{v}$.

The MUSIC pseudo-spectra are defined as $P_{\text{M}}(\theta) \defeq 1/[\vec a^\H_0(\theta) \mat U_{0,v} \mat U^\H_{0,v} \vec a_0(\theta)]$, that is,
\[
     P_{\text{M}}(\theta) = 
    \frac{\sigma^{-2}_n}{\vec a^\H_0(\theta) 
    \left[
        \mat U_{0,s},~\mat U_{0,v}
    \right] 
    \left[
        \begin{array}{cc}
             \mat 0 &  \\
                    & \mat \Sigma^{-1}_{0,v}
        \end{array}
    \right]
    \left[
        \begin{array}{c}
             \mat U^\H_{0,s}  \\
             \mat U^\H_{0,v}
        \end{array}
    \right]
    \vec a_0(\theta)
    }.
\]
Note that $\mat \Sigma_{0,v} = \sigma^2_n \mat I_{N-K}$ where $\sigma^2_n$ denotes the power of channel noise. Because the constant $\sigma^{-2}_n$ does not influence the power pattern, we can ignore it and modify the MUSIC pseudo-spectra to
\begin{equation}\label{eq:spectra-music}
     P_{\text{M}}(\theta) = 
    \frac{1}{\vec a^\H_0(\theta) 
    \left[
        \mat U_{0,s},~\mat U_{0,v}
    \right] 
    \left[
        \begin{array}{cc}
             \mat 0 &  \\
                    & \mat \Sigma^{-1}_{0,v}
        \end{array}
    \right]
    \left[
        \begin{array}{c}
             \mat U^\H_{0,s}  \\
             \mat U^\H_{0,v}
        \end{array}
    \right]
    \vec a_0(\theta)
    }.
\end{equation}

In contrast, the Capon power spectra, which solve Problem \eqref{eq:capon}, are given by $P(\theta) = 1/[\vec a^\H_0(\theta) \mat R_0^{-1} \vec a_0(\theta)]$, that is,
\begin{equation}\label{eq:capon-spectra}
    P(\theta) = \frac{1}{\vec a^\H_0(\theta) 
    \left[
        \mat U_{0,s},~\mat U_{0,v}
    \right] 
    \left[
        \begin{array}{cc}
             \mat \Sigma^{-1}_{0,s} & \\
              & \mat \Sigma^{-1}_{0,v}
        \end{array}
    \right]
    \left[
        \begin{array}{c}
             \mat U^\H_{0,s}  \\
             \mat U^\H_{0,v}
        \end{array}
    \right]
    \vec a_0(\theta)
    }.
\end{equation}

Comparing MUSIC pseudo-spectra in \eqref{eq:spectra-music} with Capon spectra in \eqref{eq:capon-spectra}, we have the following key observation.
\begin{insight}\label{insight:high-resolution}
In Capon beamforming and its variants, from the perspective of algorithm design, the key to high-resolution power spectra estimation is to downweight the signal-plus-interference subspace component $\vec a^\H_0(\theta) \mat U_{0,s} \mat \Sigma^{-1}_{0,s} \mat U^\H_{0,s} \vec a_0(\theta)
$. One way is to modify $\mat \Sigma^{-1}_{0,s}$ in \eqref{eq:capon-spectra} to $\gamma \mat \Sigma^{-1}_{0,s}$ where $0 \le \gamma \le 1$ is a small value. Another choice is to modify $\mat \Sigma^{-1}_{0,s}$ in \eqref{eq:capon-spectra} to $[\mat \Sigma_{0,s} + \gamma \mat I_K]^{-1}$ where $\gamma \ge 0$ is a large value.
\stp
\end{insight}

\subsection{Robust Beamforming for High-Resolution Power Spectra Estimation}\label{subsec:beamforming-high-resolution}

Combining the principles of robustness in Insight \ref{insight:unified-framework} and high resolution in Insight \ref{insight:high-resolution}, the robust beamforming methods for high-resolution power spectra estimation can be put forward. Let
\begin{equation}\label{eq:Gamma}
    \mat \Gamma \defeq
    \math U \left[
    \begin{array}{cc}
       \delta_1 \mat I_{K'}  &  \\
         & \delta_2 \mat I_{N-K'}
    \end{array}
    \right] \math U^\H,
\end{equation}
for $\delta_1, \delta_2 \ge 0$ and $K \le K' \le N - 1$. 

Consider the robust beamforming problem
\begin{equation}\label{eq:method-1}
       \displaystyle \min_{\vec w \in \cal W} \vec w^\H (\math R + \mat \Gamma) \vec w.
\end{equation}
The above formulation can be seen as an \bfit{unbalanced diagonal loading} (UDL) method, which generalizes the conventional balanced diagonal loading (BDL) method where $\delta_1 = \delta_2$. Note that for the purpose of high-resolution power spectra estimation, $K'$ can be any integer value in $[K, N - 1]$; see \cite[Section~V]{johnson1982application}, \cite[Section~4.5]{stoica2005spectral} for explanations and justifications. Note also that for the purpose of robustness, we only expect $\mat \Gamma$ to be positive semi-definite such that $\vec w^\H \mat R_0 \vec w \le \vec w^\H \math R \vec w + \vec w^\H \mat \Gamma \vec w$, for every $\vec w$, regardless of the value of $K'$; cf. Insights \ref{insight:regularization} and \ref{insight:unified-framework}.

Suppose $\cal U_a \defeq \{\vec a_0(\theta)\}$; that is, the steering vector is exactly known; cf. \eqref{eq:W-def}. We have the following observations:
\begin{itemize}
    \item If $\delta_1 = \delta_2 = 0$, Model \eqref{eq:method-1} gives the standard Capon beamforming method.

    \item If $\delta_1 = \delta_2 \neq 0$, Model \eqref{eq:method-1} gives the conventional BDL method; i.e., the two loading levels are the same.

    \item If $\delta_1 \neq \delta_2$, Model \eqref{eq:method-1} presents a UDL method; i.e., the two loading levels are different.

    \item For any $K' \in [K, N-1]$, if $\delta_1 \to \infty$, Model \eqref{eq:method-1} generates the high-resolution subspace pseudo-spectra \cite[Section~5]{johnson1982application}. When $K' = K$, this subspace pseudo-spectra become the MUSIC pseudo-spectra \eqref{eq:spectra-music}. When $K' = N-1$, this subspace pseudo-spectra become the Pisarenko pseudo-spectra \cite[p.~162]{stoica2005spectral}.
\end{itemize}
Motivated by the preceding observations, we propose the following robust beamforming for high-resolution power spectra estimation.
\begin{method}
\label{method:robust-capon-beamforming}
Solve robust Capon beamforming \eqref{eq:method-1} with $\mat \Gamma$ in \eqref{eq:Gamma}. To achieve high resolution for power spectral estimation, let $\delta_1 \gg \delta_2 \ge 0$; to achieve robustness against the uncertainty in $\math R$, let $\delta_1,\delta_2 > 0$.
\stp
\end{method}

\subsection{Robust Beamforming via IPN Covariance Reconstruction}
Motivated by Insight \ref{insight:unified-framework} and Method \ref{method:robust-capon-beamforming}, this subsection summarizes the algorithm for robust beamforming based on high-resolution power spectra estimation and IPN covariance reconstruction; see Algorithm \ref{algo:bf-doa}. In practice, since $\delta_1$ and $\delta_2$ in \eqref{eq:Gamma} are still user-design parameters, the employment of tradeoff parameters $\epsilon_1$ in \eqref{eq:sol-capon-robust-R-2} and $\beta$ in \eqref{eq:capon-bayesian-compact} are dispensable. Therefore, in practice, we can set $\epsilon_1 \defeq 1$ and $\beta \defeq 0.5$. 

\begin{algorithm}[!htbp]
    \caption{Robust beamforming based on high-resolution power spectra estimation and IPN covariance reconstruction}
    \label{algo:bf-doa}
    \begin{flushleft}
        \justifying
        \textbf{Basics}: Insight \ref{insight:unified-framework}, Insight \ref{insight:high-resolution}, and Method \ref{method:robust-capon-beamforming}.\\
        \textbf{Notes}: In defining $\mat \Gamma$, to ensure high resolution and sufficient robustness, let $\delta_1 \gg \delta_2 > 0$. In evaluating \eqref{eq:IPN}, the integration is approximated using a summation; cf. \cite[Eq.~(20)]{mohammadzadeh2020maximum}.
    \end{flushleft}
    \begin{algorithmic}
        \Require $\math R$, $\mat \Gamma$, $\vech a$, $\epsilon$
        \State // \text{Step 1: High-Resolution Power Spectral Estimation}
        \State \quad $P(\theta) = \frac{1}{\vech a^{\H}(\theta) [\math R + \mat \Gamma]^{-1} \vech a(\theta)}$ 

        \State // \text{Step 2: IPN Covariance Reconstruction}
        \State \quad Use $P(\theta)$, $\vech a$, and \eqref{eq:IPN} to reconstruct $\Ripnh$

        \State // \text{Step 3: Robust Beamforming using Diagonal Loading}
        \State \quad $\vec w^* = (\Ripnh + \epsilon \mat I_N)^{-1} \vech a / [\vech a^\H (\Ripnh + \epsilon \mat I_N)^{-1} \vech a]$
        
        \Ensure {Robust beamformer \captext{$\vec w^*$}}
    \end{algorithmic}
\end{algorithm}

In Step 1, the unbalanced diagonal loading operation does not alter the main characteristics of the power spectrum (e.g., the locations of peaks), since it perturbs only the eigenvalues while preserving the eigenvectors; cf. \eqref{eq:Gamma}. Although eigenvalues affect the shape and amplitude scaling of the spectrum (thereby allowing control over resolution), the angular selectivity in a power spectrum is primarily governed by the eigenvectors \cite{schmidt1986multiple}; for empirical illustration, see Fig.~\ref{fig:capon-spectra} in the experiment section. Therefore, the directions of interferences and their relative power levels, as reflected in the spectrum, can be effectively identified. As a result, the IPN covariance reconstruction in Step 2 remains valid and captures the essential statistical information of the IPN. Consequently, in Step 3, the IPN signals can be effectively suppressed. 

The computational complexity of Algorithm \ref{algo:bf-doa} is as follows.

\begin{remark}[Computationally Complexity]\label{rem:complexity}
The computational complexity of Algorithm \ref{algo:bf-doa} is comparable to the method in \cite{gu2012robust}. The differences between the two approaches are two-fold: first, when estimating the power spectra, Algorithm \ref{algo:bf-doa} uses $\math R + \mat \Gamma$, instead of $\math R$, to enhance resolution and robustness; second, diagonal loading, i.e., $\Ripnh + \epsilon \mat I_N$, is applied to ensure the robustness against the uncertainty in $\Ripnh$. The two differences lead to additional $\cal O(N^2)$ FLOPs (floating-point operations). Hence, the main computational burden of Algorithm \ref{algo:bf-doa} is at evaluating the integral in \eqref{eq:IPN}, which admits $\cal O(N^2 S)$ FLOPs; $S$ is the number of sampling points in $\bar \Theta$; NB: $S \gg N$. As a result, the overall complexity of Algorithm \ref{algo:bf-doa} is $\cal O(N^2 S)$, the same as the method in \cite{gu2012robust}.
\stp
\end{remark}

In addition, the following remark is practically instructive.
\begin{remark}[Design $\cal W$]\label{rem:W}
Technically, addressing either the uncertainty in the assumed steering vector or that in the estimated IPN covariance leads to the regularization operation (e.g., diagonal loading method); see \cite{cox1987robust,vorobyov2003robust,li2003robust}, Insight \ref{insight:regularization}, and Insight \ref{insight:unified-framework}. Therefore, in real-world operation, there is no need to consider the uncertainty in the steering vector; that is, the uncertainty set for the steering vector contains only a singleton:
\begin{equation}\label{eq:practical-W}
    \cal U_a(\theta) \defeq \{\vech a(\theta)\},~~~~~\forall \theta;
\end{equation}
see \eqref{eq:W-def}. This trick greatly reduces the operational complexity of Algorithm \ref{algo:bf-doa}. However, the robustness against the uncertainties in the assumed steering
vector and the estimated IPN covariance remains. If general $\cal U_a$ are preferred, refer to handling methods in, e.g.,  \cite{wu1999new,vorobyov2003robust,li2004doubly,lorenz2005robust,huang2023robust}.
\stp
\end{remark}

\section{Experiments}\label{sec:experiment}
In this section, we consider a uniform linear array with $N = 10$ and half-wavelength spacing \cite{vorobyov2003robust,gu2012robust}. We show how the unbalanced diagonal loading trick in Method \ref{method:robust-capon-beamforming} benefits the suppression of closely located interferers. Our method complements the state of the art because existing studies assume that interferers are sufficiently separated in their experiments; see, e.g., \cite{vorobyov2003robust,gu2012robust,mohammadzadeh2022covariance}. We suppose that the SoI is at $\theta_1 = -30^\circ$, and two \textbf{strong} interferers with interference-to-signal ratio (ISR) of 10dB are located at $-22^\circ$ and $30^\circ$. In IPN covariance reconstruction, we let the integration interval in \eqref{eq:IPN} be $\bar \Theta \defeq [-90^\circ, -35^\circ] \cup [-25^\circ, 90^\circ]$ (cf. \cite{gu2012robust}); that is, the interferer at $\theta_2 = -22^\circ$ is close to the boundary of the uncertainty region $[-35^\circ, -25^\circ]$ of the SoI. Throughout the experiments, we let $\vech a(\theta) \defeq \vec a_0(\theta) + \vec \Delta$ where $\vec \Delta$ is a Gaussian random vector with covariance $0.01 \mat I_N$; for every Monte-Carlo episode, this $\vec \Delta$ changes its value. The output SINR performance of each beamformer is averaged over 500 Monte-Carlo simulations. All the source data and codes are available online at GitHub: \url{https://github.com/Spratm-Asleaf/Beamforming-UDL}.


\subsection{Experimental Verification of Robustness Principles}
In this subsection, we provide empirical verifications of the two robustness principles (i.e., Bayesian and Regularization) in Insight \ref{insight:unified-framework}. Without loss of generality for demonstration, the MPDR (i.e., Capon) beamforming scheme is employed. Accordingly, we refer to the first robustness formulation as Capon-Bayesian, while the second as Capon-Regularization. We sweep the value of $\beta$ from 0 to 1 with step size 0.2, and fix $\epsilon = 0.25$ so that, for $\beta = 0.2$, we have $\epsilon = 0.25 = 0.2/0.8 = \beta/(1-\beta)$. The output SINR is plotted against input signal-to-noise ratio (SNR). The number $L$ of snapshots is 30. (Other values of $\epsilon$ and $L$ do not change the empirical claims in this subsection.) For performance comparison, the following two methods are also implemented: 
\begin{itemize}
    \item {Optimal}: The optimal beamformer where the true IPN covariance matrix $\Ripn$ and the true steering vector $\vec a_0(\theta)$ are exactly known. Note that, in real-world operations, this method is not applicable.
    
    \item {Capon}: The Capon beamformer with estimated snapshot covariance $\math R$ and assumed steering vector $\vech a(\theta)$.
\end{itemize}
First, we assume that no effective prior information about the true snapshot covariance $\mat R_0$ is available, so we let $\matb R \defeq \mat I_N$. Consequently, the second principle in Insight \ref{insight:unified-framework}, i.e., the regularization formulation $\min_{\vec w \in \cal W}  \vec w^\H \math R \vec w + \epsilon \vec w^\H \matb R \vec w$, specifies the diagonal loading method with loading level $\epsilon = 0.25$, which provides the robustness against steering-vector uncertainties and background white noises. The experimental results are shown in Fig. \ref{fig:robustness-sinr}, for which we have the following remarks:
\begin{itemize}
    \item In the Bayesian non-parametric principle, convexly combining the empirical estimate $\math R$ with the prior guess $\matb R$ can potentially improve the robustness.

    \item In the regularization principle, employing the regularizer $\matb R$ can also potentially improve the robustness.

    \item When $\beta = 0.0$, Capon-Bayesian reduces to the Capon method; when $\beta = 0.2$, Capon-Bayesian is equivalent to Capon-Regularization because $\epsilon = 0.25 = \beta/(1-\beta)$; when $\beta = 1.0$, Capon-Bayesian reduces to the conventional non-adaptive Bartlett beamformer.

    \item When $\beta$ is well tuned (e.g., $0.4, 0.6, 0.8$), Capon-Bayesian outperforms Capon-Regularization with $\epsilon = 0.25$.
\end{itemize}

\begin{figure}[!htbp]
	\centering
	\subfigure[$\beta = 0.0$]{
	 	\includegraphics[height=3cm]{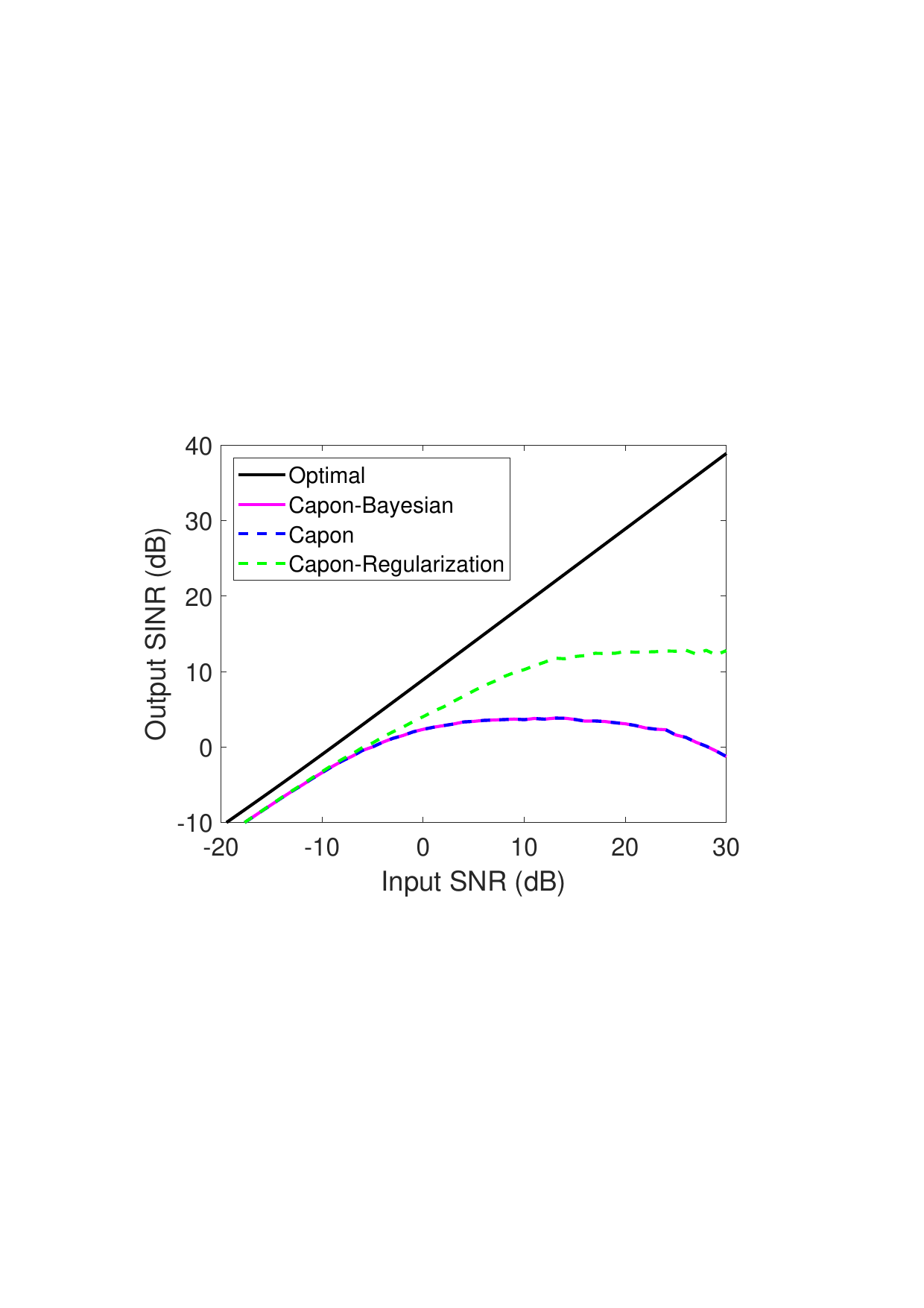}
	}
    \subfigure[$\beta = 0.2$]{
	 	\includegraphics[height=3cm]{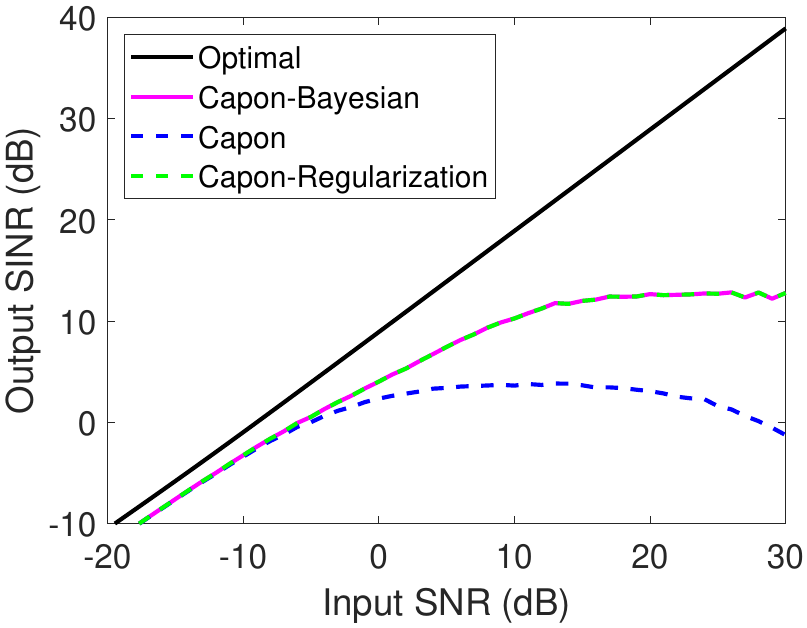}
	}

	\subfigure[$\beta = 0.4$]{
	 	\includegraphics[height=3cm]{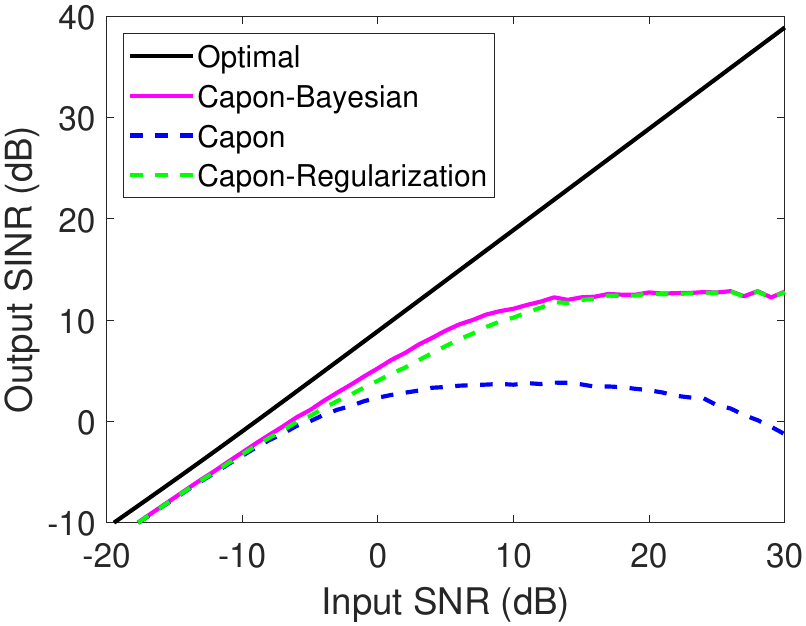}
	}
    \subfigure[$\beta = 0.6$]{
	 	\includegraphics[height=3cm]{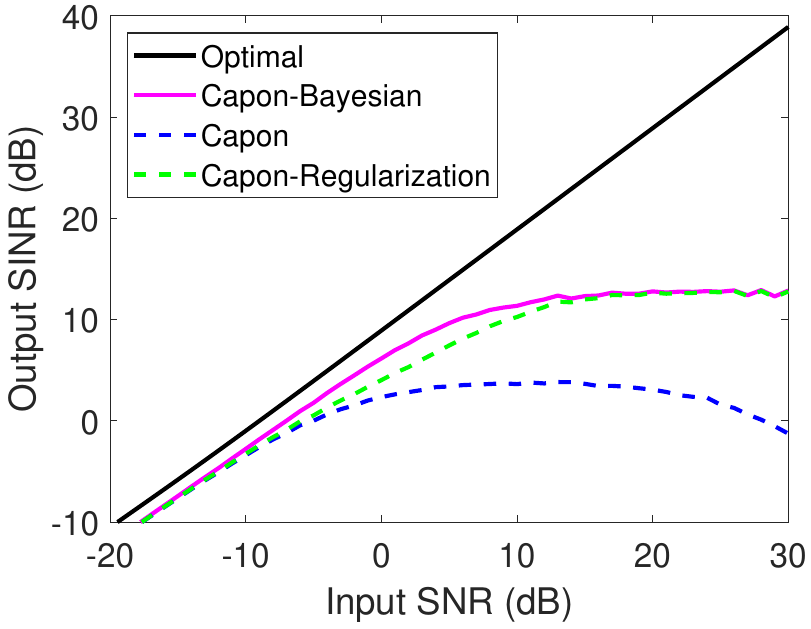}
	}
    
	\subfigure[$\beta = 0.8$]{
	 	\includegraphics[height=3cm]{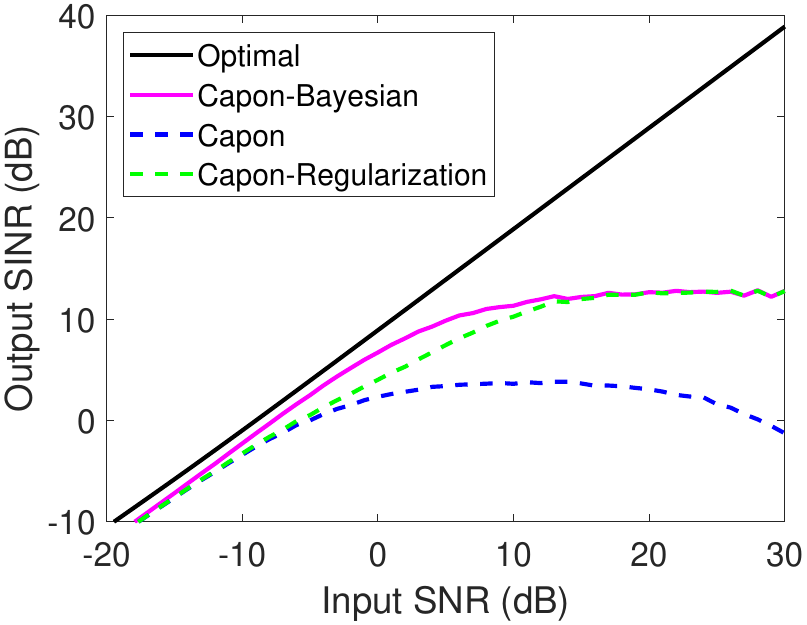}
	}
    \subfigure[$\beta = 1.0$]{
	 	\includegraphics[height=3cm]{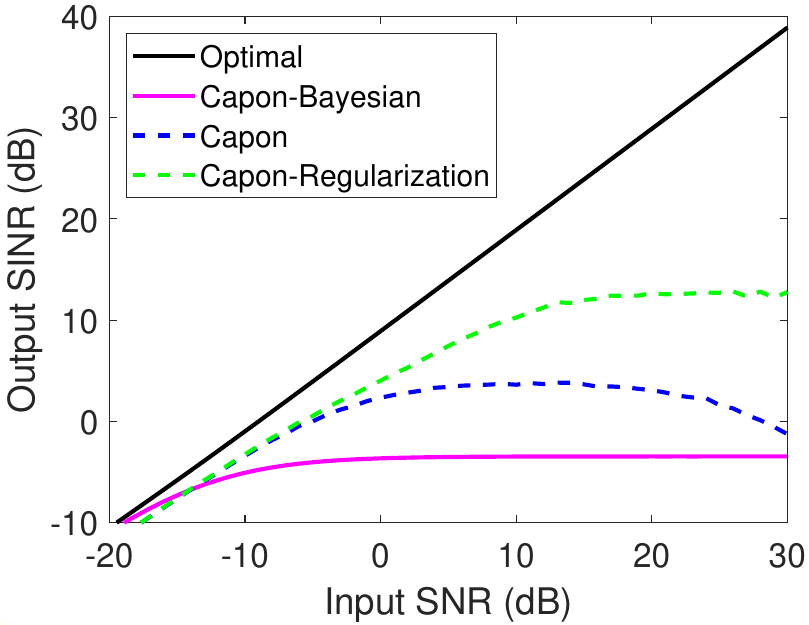}
	}
    
	\caption{Empirical verifications of the robustness principles in Insight \ref{insight:unified-framework} when no effective prior information about interferers is assumed.}
	\label{fig:robustness-sinr}
\end{figure}

Second, we suppose that the two interferers are known to be located within $[-24, -20]$ and $[28, 32]$, respectively. This prior information can be used to construct $\matb R$. To be specific, letting $\bar \Theta_1$ and $\bar \Theta_2$ be sampling subsets of $[-24, -20]$ and $[28, 32]$, respectively, with step size of $0.1$ degrees, we have
\begin{equation}\label{eq:Rbar-prior-info}
    \matb R \defeq \sum_{\theta_i \in \bar \Theta_1} \vech a(\theta_i) \vech a^\H(\theta_j) + \sum_{\theta_j \in \bar \Theta_2} \vech a(\theta_j) \vech a^\H(\theta_i) + \mat I_N.
\end{equation}
The first two terms account for interference suppression, whereas the identity matrix $\mat I_N$ provides the robustness against steering-vector uncertainties and background white noises (cf. diagonal loading). Tradeoff coefficients for the three terms can be introduced; however, for demonstration purposes, it suffices to use \eqref{eq:Rbar-prior-info} as an example. To show performance improvement when effective prior information is incorporated, in the Bayesian non-parametric principle, we let $\matb R$ be defined in \eqref{eq:Rbar-prior-info}, while in the regularization principle, we keep $\matb R = \mat I_N$ unchanged (as in Fig. \ref{fig:robustness-sinr}). The experimental results are shown in Fig. \ref{fig:robustness-sinr-prior-info}, for
which we have the following remarks: 
\begin{itemize}
    \item In the Bayesian non-parametric formulation, effective prior information indeed helps improve the ability of interference suppression.

    \item In the Bayesian non-parametric formulation, overly emphasizing the prior information (with large $\beta$), on the contrary, can deteriorate the performance because the resulting beamformer loses adaptivity to snapshots. For example, see performance degradation in Fig. \ref{fig:robustness-sinr-prior-info}(f) when input SNR is small.
\end{itemize}

\begin{figure}[!htbp]
	\centering
	\subfigure[$\beta = 0.0$]{
	 	\includegraphics[height=3cm]{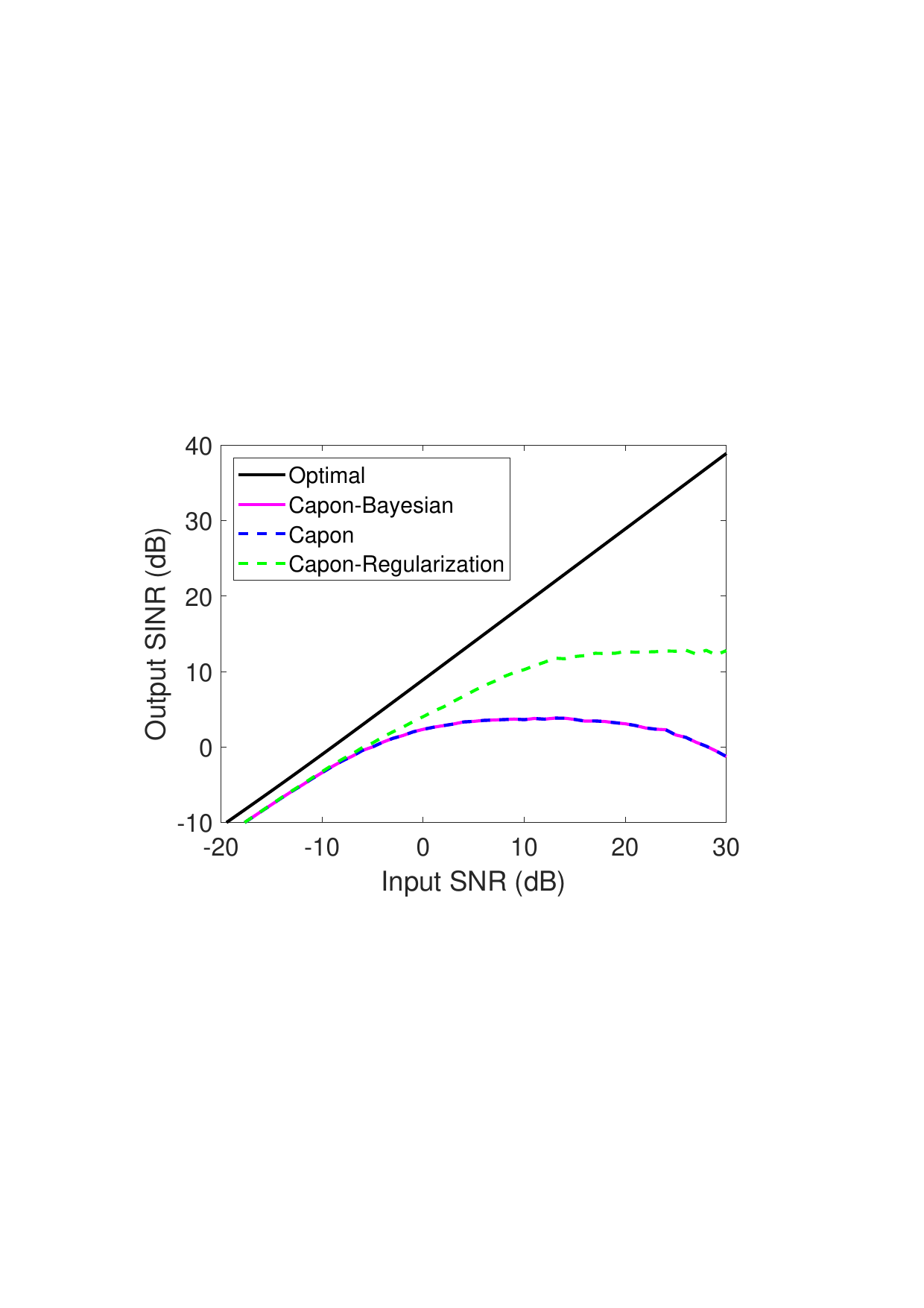}
	}
    \subfigure[$\beta = 0.2$]{
	 	\includegraphics[height=3cm]{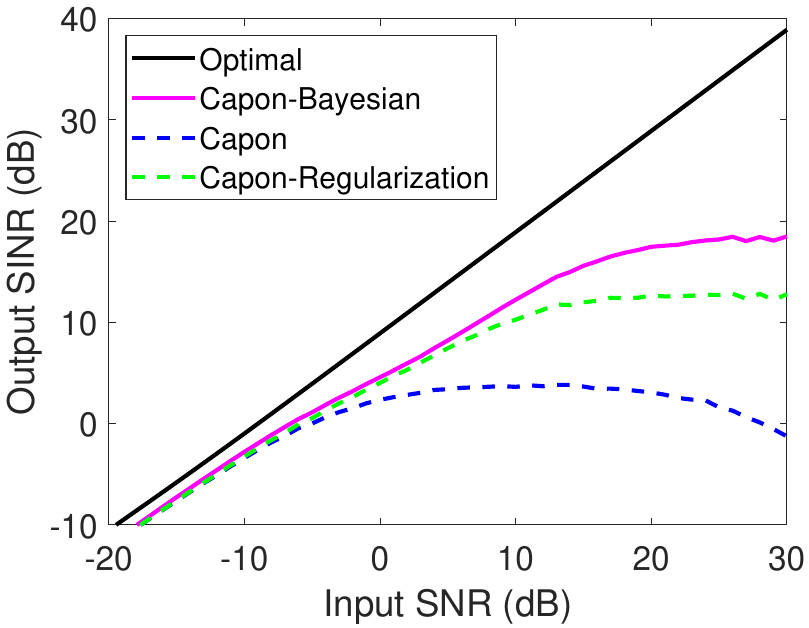}
	}

	\subfigure[$\beta = 0.4$]{
	 	\includegraphics[height=3cm]{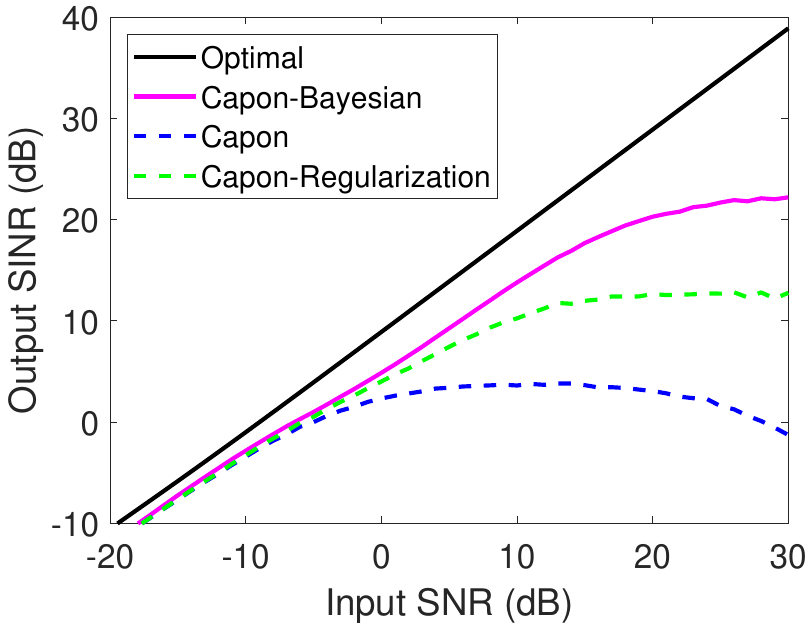}
	}
    \subfigure[$\beta = 0.6$]{
	 	\includegraphics[height=3cm]{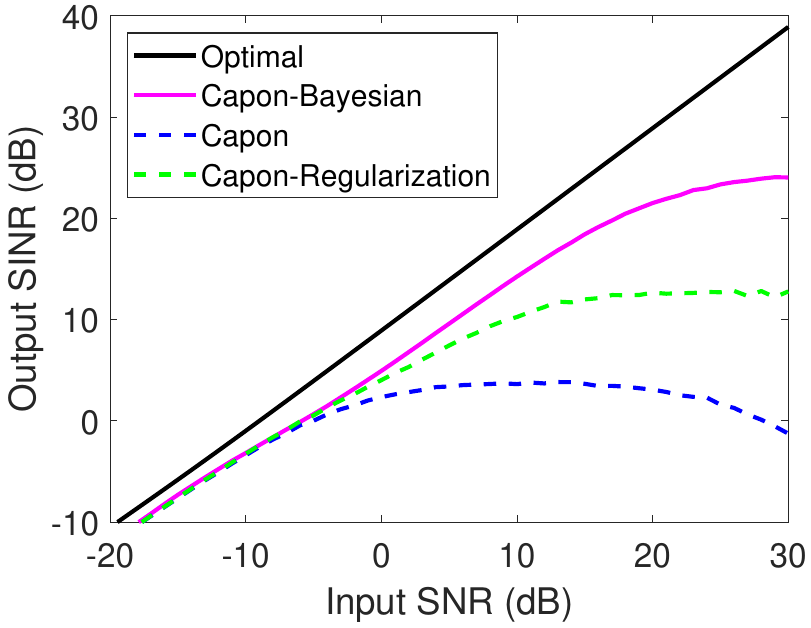}
	}
    
	\subfigure[$\beta = 0.8$]{
	 	\includegraphics[height=3cm]{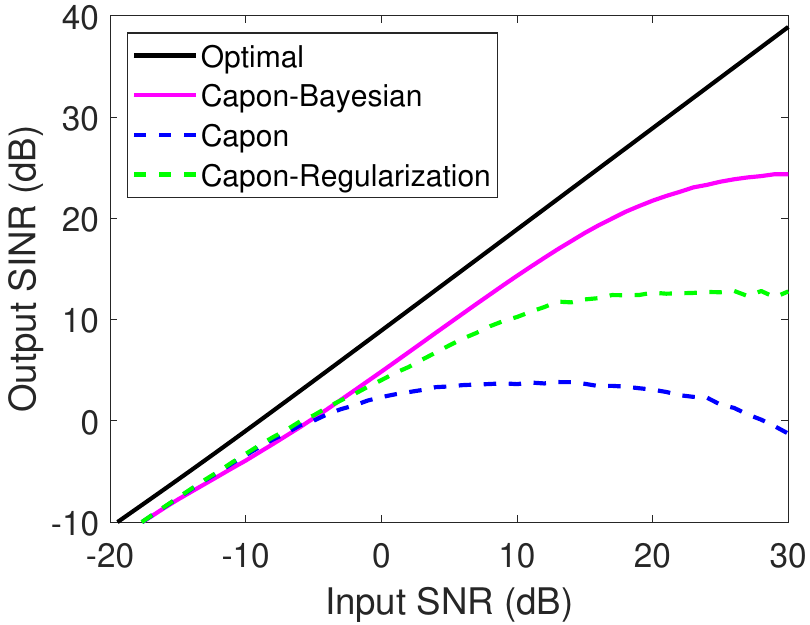}
	}
    \subfigure[$\beta = 1.0$]{
	 	\includegraphics[height=3cm]{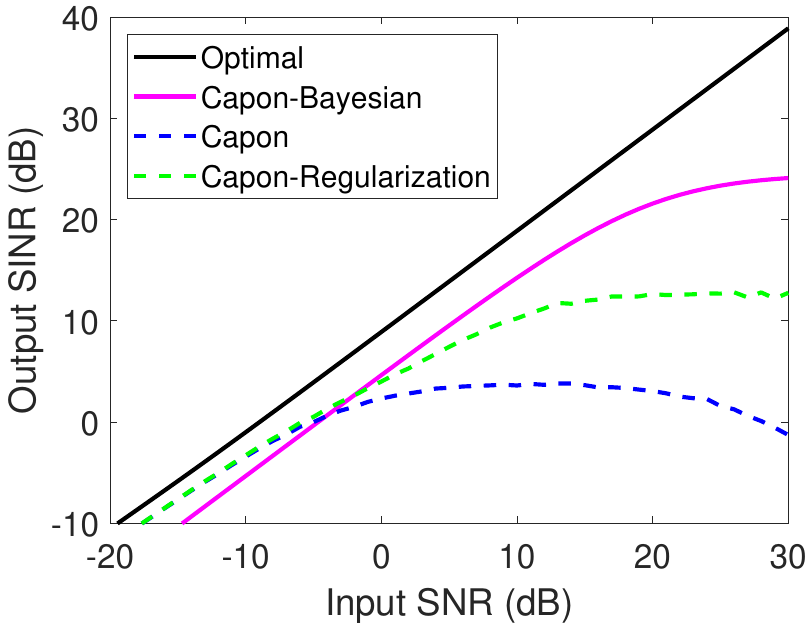}
	}
    
	\caption{Empirical verifications of the robustness principles in Insight \ref{insight:unified-framework} when effective prior location information about interferers is assumed.}
	\label{fig:robustness-sinr-prior-info}
\end{figure}

In summary, both robustness formulations in Insight \ref{insight:unified-framework} can provide significant robustness, especially when effective prior information can be given. Although the two formulations are developed from different motivations and techniques, given the same $\matb R$, they are equivalent if $\epsilon = \beta/(1-\beta)$. Due to this equivalence, we do not explicitly differentiate the two formulations in later experiments; instead, we just leverage the regularization formulation.

\subsection{Experimental Verification of Algorithm \ref{algo:bf-doa}}

In this subsection, we study the empirical performance of the proposed Algorithm \ref{algo:bf-doa}. In addition to the previously defined Optimal and Capon, the following state-of-the-art methods are implemented in the experiments for performance comparison:
\begin{itemize}
    \item IPN: Robust beamforming based on IPN covariance reconstruction \cite{gu2012robust}.
    
    \item IPN-DL: Diagonally-loaded robust beamforming based on IPN covariance reconstruction. Note that the estimation of IPN covariance can never be exact in practice. Hence, robust methods such as diagonal loading have the potential to further improve the performance of the approach in \cite{gu2012robust}. 

    \item IPN-UDL: Algorithm \ref{algo:bf-doa} of this article. IPN-DL is a special case of IPN-UDL for $\delta_1 = \delta_2 = 0$.

    \item IPN-MaxEnt: Robust beamforming based on maximum-entropy (MaxEnt) spectra estimation and IPN covariance reconstruction \cite{mohammadzadeh2020maximum}. Although the MaxEnt spectra tend to have better resolution than the Capon spectra, the biases of the MaxEnt spectra can be large \cite[Fig.~4]{schmidt1986multiple}. Hence, the performance advantage of the IPN-MaxEnt method \cite{mohammadzadeh2020maximum} over the IPN method \cite{gu2012robust} may significantly vary across different scenarios or different Monte-Carlo trials.

    \item Capon-DL: Diagonally-loaded Capon beamformer.
\end{itemize}
The first four methods are under the MVDR beamforming scheme, while the last method is under the MPDR beamforming scheme. We do not study other existing methods because they are greatly dominated by IPN \cite{gu2012robust} and IPN-MaxEnt \cite{mohammadzadeh2020maximum}. Capon and Capon-DL are kept just for the illustration of the theoretical analyses in this article.

\subsubsection{Power Spectra Estimation}

The power spectra estimated by Capon beamforming, Capon beamforming with diagonal loading, and Capon beamforming with unbalanced diagonal loading are shown in Fig. \ref{fig:capon-spectra}. Note that in evaluating the resolution of a power spectra estimation method, all emitting signals should have the same energy; that is, the interference-to-signal ratio should be 0dB \cite[p.~1020]{johnson1982application}. Fig. \ref{fig:capon-spectra} suggests that the diagonal loading operation indeed widens the beams, and hence, deteriorates the resolution. This experimental observation agrees with the theoretical analyses in the beginning of Section \ref{sec:doa-estimation}. However, the unbalanced diagonal loading trick can improve the resolution in power spectra estimation, aligning with the motivation in Insight \ref{insight:high-resolution} and Method \ref{method:robust-capon-beamforming}. Note that the larger the degree of unbalancedness is (i.e., the larger the value of $\delta_1$ when fixing $\delta_2$), the better the resolution is.

\begin{figure}[htbp]
	\centering

    \subfigure[$\delta_1 = 2$]{
	 	\includegraphics[height=3cm]{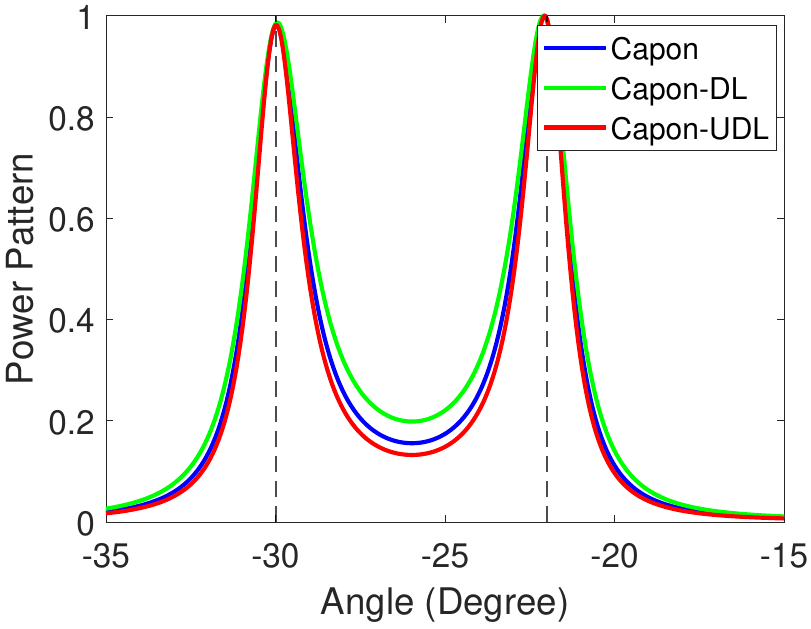}
	}
	\subfigure[$\delta_1 = 6$]{
	 	\includegraphics[height=3cm]{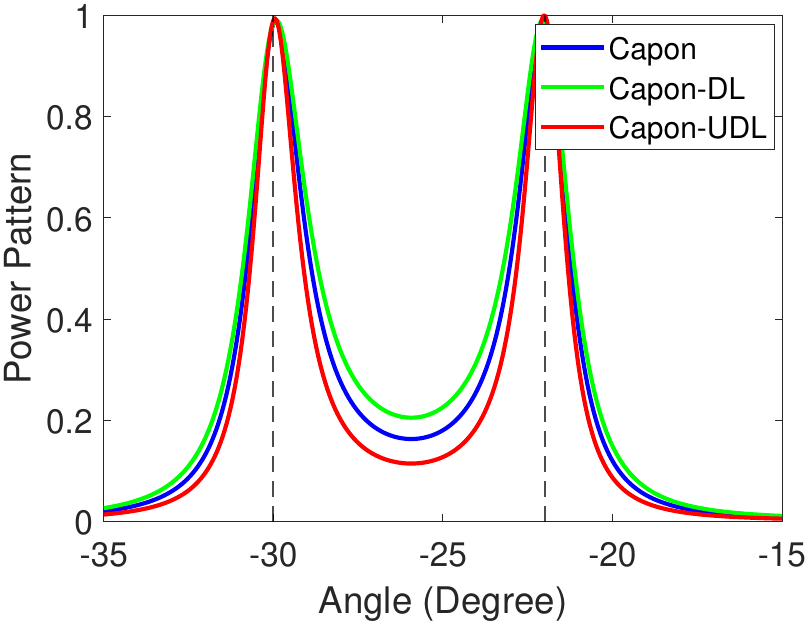}
	}
    
	\caption{Power spectra of Capon beamforming, Capon beamforming with diagonal loading (DL), and Capon beamforming with Unbalanced DL (UDL). In this scenario, ISR = 0dB and SNR = 25dB; the number of snapshots is 30; in Capon-DL, the DL parameter is $\epsilon = 0.01$; in Capon-UDL, $\delta_2 = 0.0$ (cf. Insight \ref{insight:high-resolution}).}
	\label{fig:capon-spectra}
\end{figure}

\subsubsection{Output SINR}

To compare output SINR performances, all beamforming methods are empirically tuned to achieve their best operational performances in the investigated scenarios. To be specific, in IPN-DL, the diagonal loading parameter is $\epsilon = 0.01$; in IPN-UDL, $K' = K = 3$, $\delta_1 = 10$, and $\delta_2 = \epsilon = 0.01$ (NB: large $\delta_1$ for high-resolution and nonzero $\delta_2$ for robustness; see Method \ref{method:robust-capon-beamforming}); in Capon-DL, the diagonal loading parameter is $\epsilon = 0.1$.

The output SINR performance of beamformers is shown in Fig. \ref{fig:sinr}. When plotting against the input SNR, the number of snapshots is set to 30 or 80; when plotting against the number of snapshots, the input SNR is set to 10dB or 25dB. 

\begin{figure}[!htbp]
	\centering
	\subfigure[Against SNR (Snapshot = 30)]{
	 	\includegraphics[height=3cm]{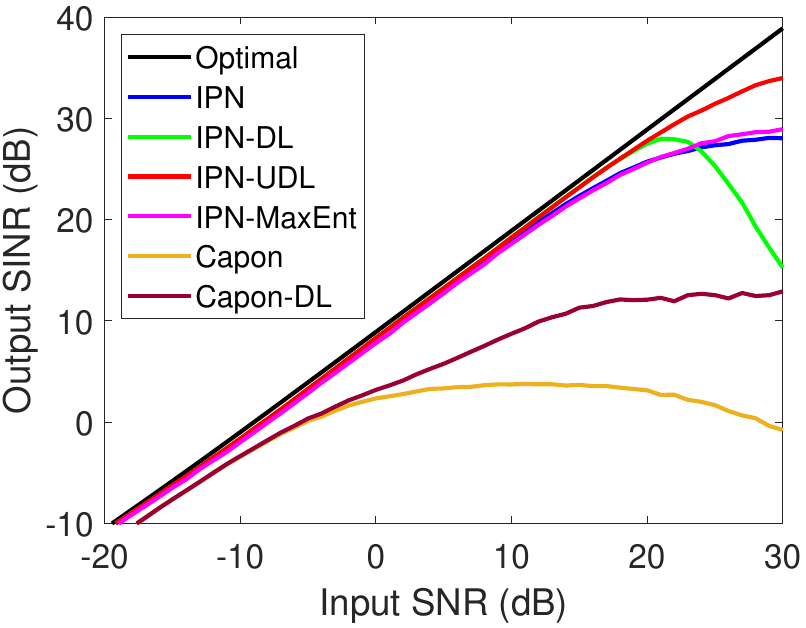}
	}
    \subfigure[Closeup of (a)]{
	 	\includegraphics[height=3cm]{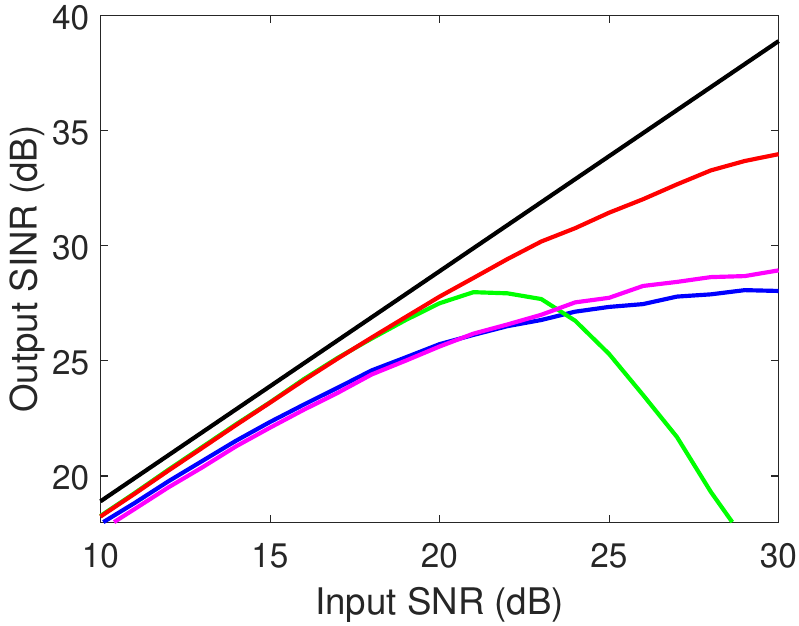}
	}

	\subfigure[Against SNR (Snapshot = 80)]{
	 	\includegraphics[height=3cm]{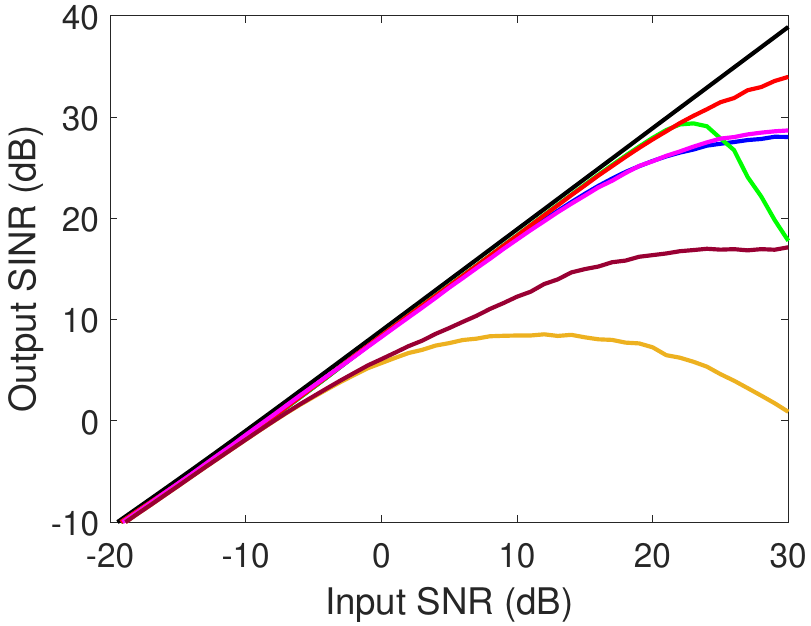}
	}
    \subfigure[Closeup of (c)]{
	 	\includegraphics[height=3cm]{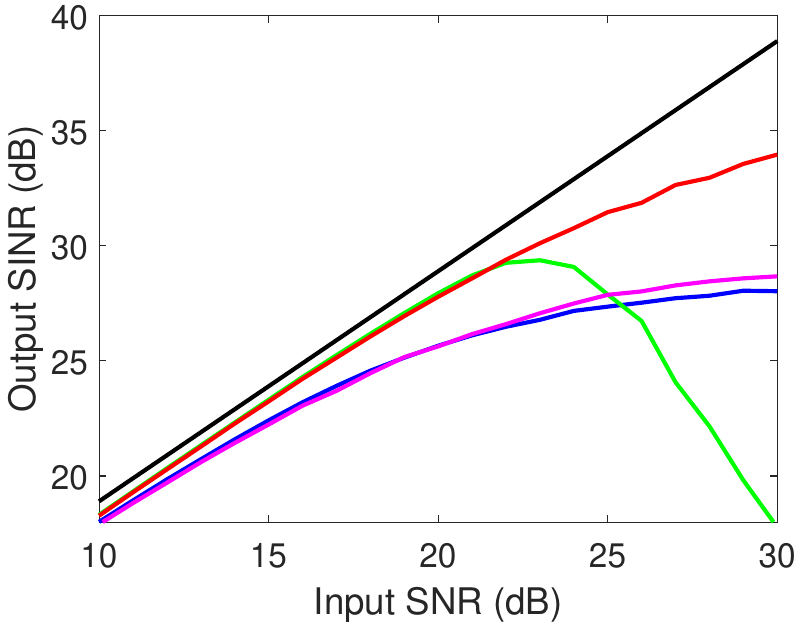}
	}
    
	\subfigure[Against Snapshot (SNR = 10dB)]{
	 	\includegraphics[height=3cm]{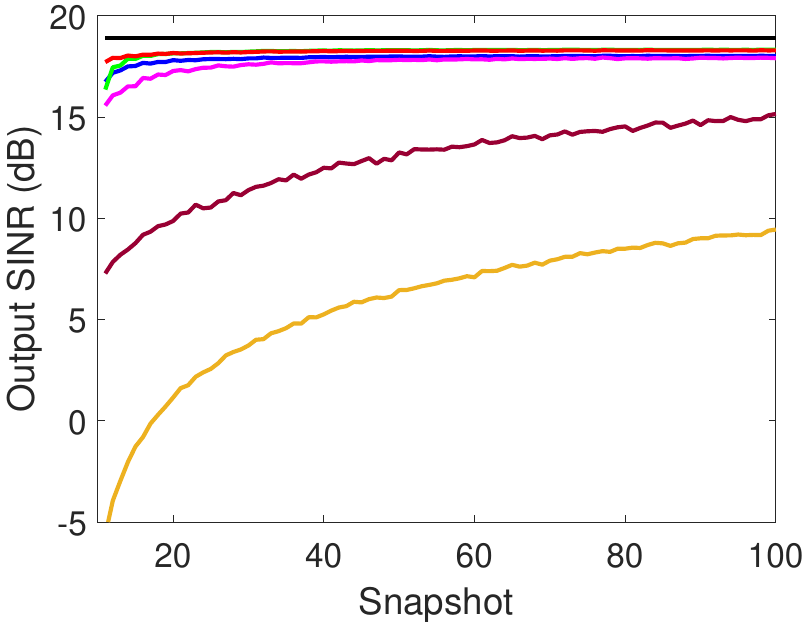}
	}
    \subfigure[Closeup of (e)]{
	 	\includegraphics[height=3cm]{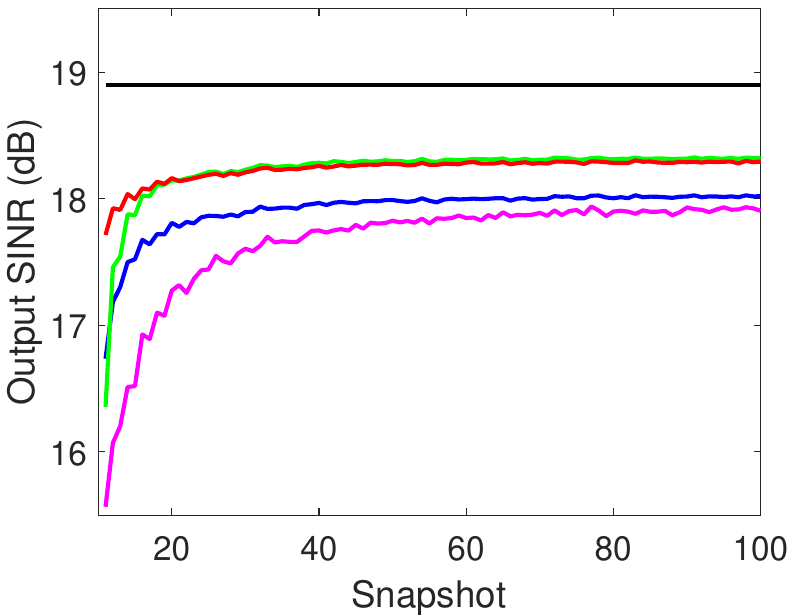}
	}

	\subfigure[Against Snapshot (SNR = 25dB)]{
	 	\includegraphics[height=3cm]{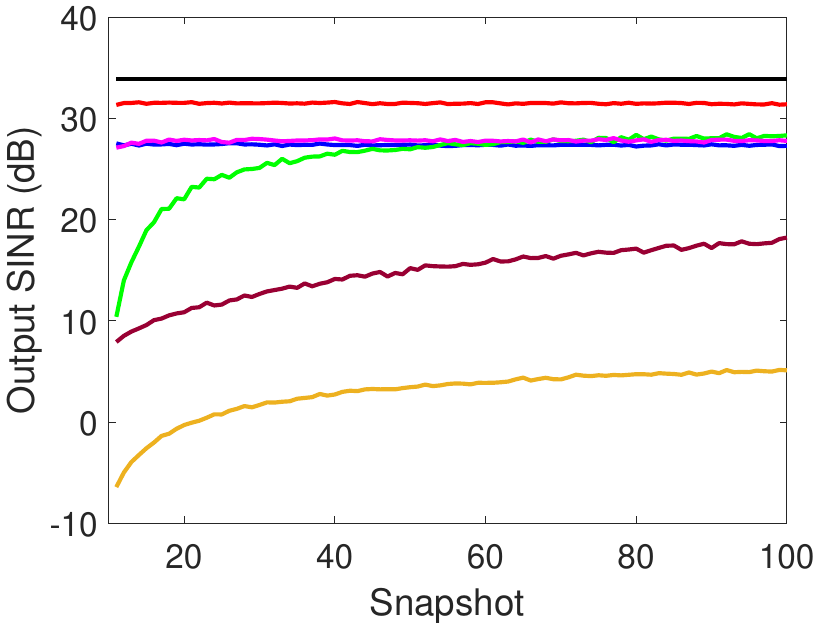}
	}
    \subfigure[Closeup of (g)]{
	 	\includegraphics[height=3cm]{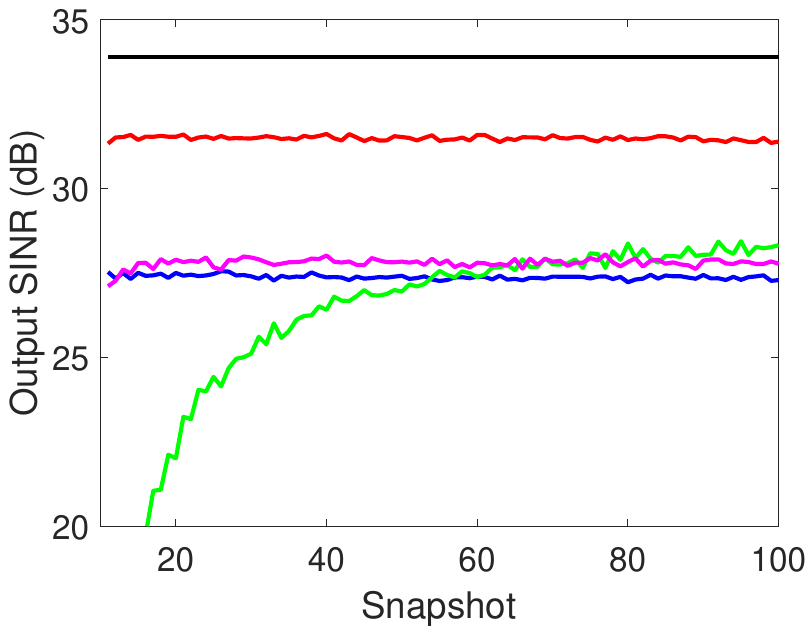}
	}
    
	\caption{The output SINR performance of beamformers against input SNR and snapshots. In (a), the number of snapshots is 30; in (c), the number of snapshots is 80; in (e), the SNR is 10dB; in (g), the SNR is 25dB.}
	\label{fig:sinr}
\end{figure}

Fig. \ref{fig:sinr} outlines the following observations.
\begin{itemize}
    \item For both IPN and Capon, diagonal loading has the potential to improve their robustness, especially in the regime of small to moderate SNRs. This is because the IPN covariance in the MVDR beamforming and the snapshot covariance in the MPDR beamforming can never be exactly estimated. Therefore, according to the analyses in Section \ref{sec:robustness-theory} (e.g., Insights \ref{insight:regularization}, \ref{insight:bayes-model}, and \ref{insight:unified-framework}), robust beamforming methods tend to outperform their nominal counterparts.

    \item IPN-UDL significantly outperforms the other beamformers due to its high-resolution and low-bias in power spectra estimation; cf. Fig. \ref{fig:capon-spectra}.
\end{itemize}

\subsubsection{Supplementary Experiments}\label{subsec:add-experiemnts}

In the experiment associated with Fig. \ref{fig:sinr}, we have assumed that $K' = K = 3$. Below, we relax this assumption and admit that $K'$ may not be exactly known. We set $K' = 5$ or $K' = 7$. The corresponding results are given in Fig. \ref{fig:sinr-k-5-7}. As we can see, the proposed IPN-UDL method is not sensitive to the value of $K'$, as long as it is reasonably set in practice. However, when $K' = 7$, the performance indeed slightly degrades compared to that for $K' = 5$ and $K' = K = 3$. This experimental observation agrees with our theoretical analyses in Subsection \ref{subsec:beamforming-high-resolution}.

\begin{figure}[htbp]
	\centering
	\subfigure[$K' = 5$]{
	 	\includegraphics[height=3cm]{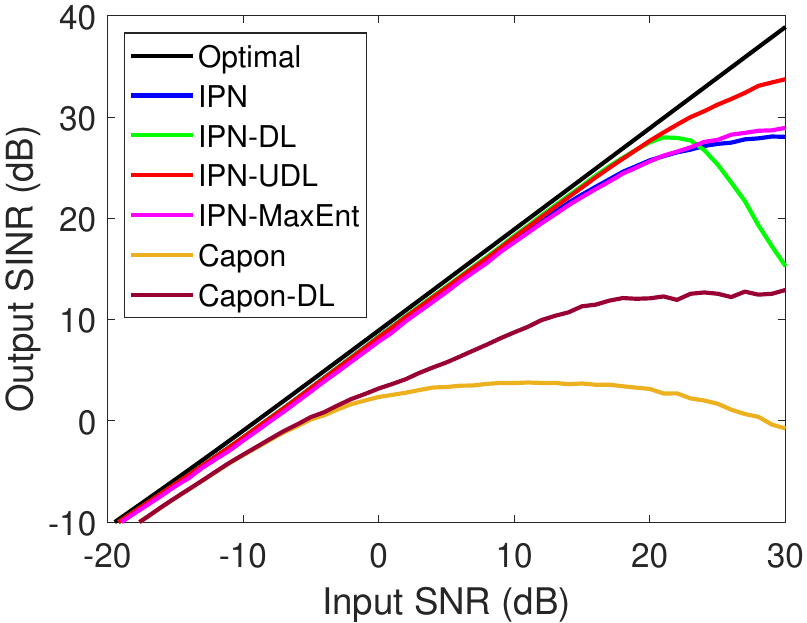}
	}
    \subfigure[$K' = 7$]{
	 	\includegraphics[height=3cm]{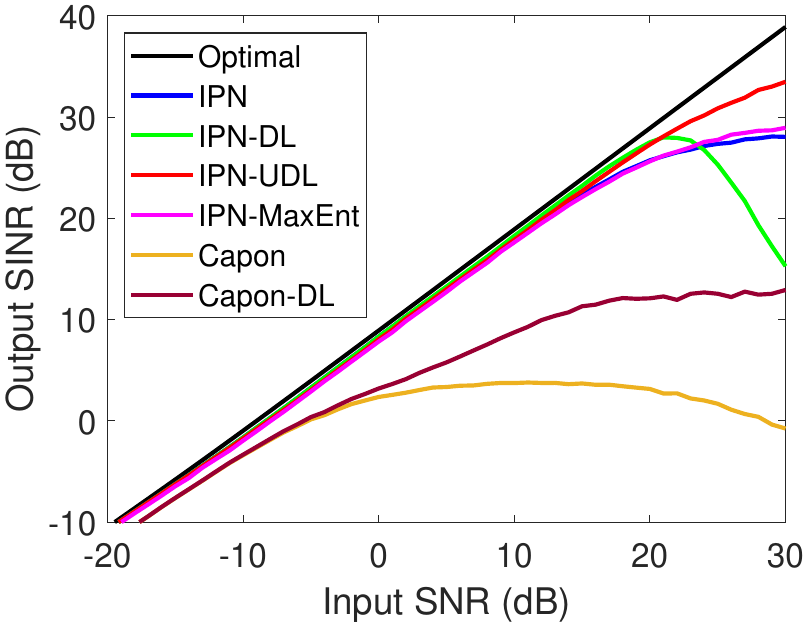}
	}

	\caption{The output SINR performance of beamformers against input SNR when $K' = 5$ and $K' = 7$. The number of snapshots is 30.}
	\label{fig:sinr-k-5-7}
\end{figure}

When the interference-to-signal ratio is 0dB and the locations of interferers are at $\theta_2 = 0^\circ$ and $\theta_3 = 30^\circ$, the output SINR performance against the input SNR is shown in Fig. \ref{fig:sinr-far-interference}. In this case, the interferers are far away from the SoI and the interference energies are equal to that of the SoI. Therefore, as reported in \cite{gu2012robust}, the IPN method is indeed a powerful tool that can achieve nearly optimal performance in both low and high SNR regimes.

\begin{figure}[htbp]
	\centering

	\subfigure[Far and Weak Interferers]{
	 	\includegraphics[height=3cm]{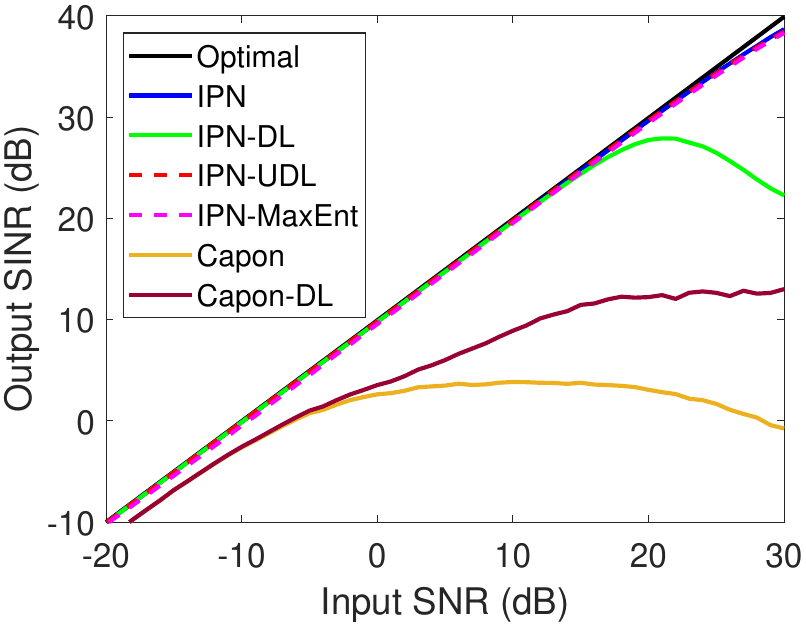}
	}
    \subfigure[Closeup of (a)]{
	 	\includegraphics[height=3cm]{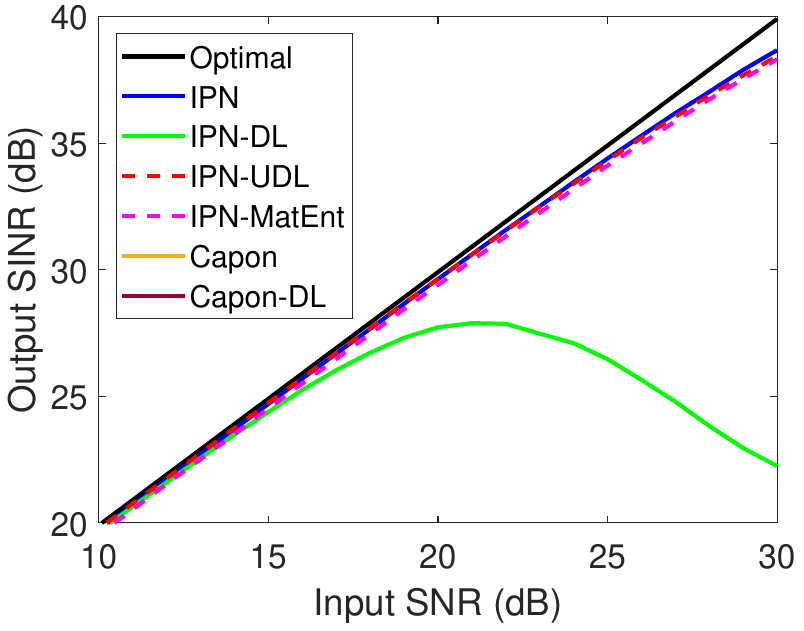}
	}

	\caption{The output SINR performance of beamformers against input SNR (far and weak interferers). The ISR is 0dB and the number of snapshots is 30.}
	\label{fig:sinr-far-interference}
\end{figure}

\section{Conclusions}\label{sec:conclusion}
To solve the problems in Subsection \ref{subsec:problem-statements} and answer the questions in Subsection \ref{subsec:research-questions}, this article comprehensively studies distributionally robust beamforming, including the conceptual system, theoretical analysis, and algorithmic design. By addressing uncertainties in snapshot and IPN covariance matrices, a unified framework for distributionally robust beamforming that generalizes several existing approaches (e.g., diagonal loading, eigenvalue thresholding, and prior-knowledge embedding) is presented (see Insight \ref{insight:unified-framework}), which draws insights from four technical approaches: locally robust, globally robust, regularized, and Bayesian-nonparametric beamforming. Our theoretical analyses demonstrate the equivalence among the four methods, and show that integrating them with subspace methods using the unbalanced diagonal-loading trick \eqref{eq:Gamma} can enhance the resolution of power spectra (and DoA) estimation. Such high resolution can improve the reconstruction accuracy of IPN covariances, and hence, improve the SINR performance when interferers are located close to the SoI.

\appendices

\section{Proof of Theorem \captext{\ref{thm:robust-bf-R0}}}\label{append:robust-bf-R0}
\begin{proof}
First, Problem \eqref{eq:robust-bf-R0} is equivalent to
\[
    \begin{array}{cl}
       \displaystyle \min_{\vec w \in \cal W} \min_{k} & k \\
        \st & \displaystyle \max_{\mat R \in \cal B_{\epsilon}(\mat R_0)} \vec w^\H \mat R \vec w  - \vec w_0^\H  \mat R_0 \vec w_0 \le k, \\
        & k \ge 0.
    \end{array}
\]
By eliminating $k$, the above display is equivalent to
\[
    \displaystyle \min_{\vec w \in \cal W} \max_{\mat R \in \cal B_{\epsilon}(\mat R_0)}  \vec w^\H \mat R \vec w   - \vec w_0^\H  \mat R_0 \vec w_0,
\]
which is further equivalent to \eqref{eq:minimax-robust-bf-R0}. Note that $\max_{\mat R}  \vec w^\H \mat R \vec w   - \vec w_0^\H  \mat R_0 \vec w_0 \ge 0$ because $\mat R_0 \in \cal B_{\epsilon}(\mat R_0)$. The second statement can be similarly proven.
\stp
\end{proof}

\section{Proof of Theorem \captext{\ref{thm:sol-robust-bf-Rh-global}}}\label{append:sol-robust-bf-Rh-global}
\begin{proof}
If $d(\mat R, \math R) \defeq \Tr[\mat R - \math R]^\H[\mat R - \math R]$, Problem \eqref{eq:robust-bf-Rh-global} can be written as
\[
    \begin{array}{cl}
       \displaystyle \min_{\vec w \in \cal W, k}  & k \\
        \st & \displaystyle \max_{\mat R \in \C^{N \times N}} \vec w^\H \mat R \vec w  - k \Tr[\mat R - \math R]^\H[\mat R - \math R] \le \tau. \\
        & k \ge 0.
    \end{array}
\]
In terms of $\mat R$, this is a quadratic convex and unconstrained optimization, and therefore, the first-order optimality condition $2k(\mat R - \math R) = \vec w \vec w^\H$ gives the globally optimal solution $\mat R = \math R + \vec w \vec w^\H / (2k)$. Hence, the above optimization is equivalent to
\[
    \begin{array}{cl}
       \displaystyle \min_{\vec w \in \cal W, k}  & k \\
        \st & \vec w^\H \left[\math R + \displaystyle \frac{\vec w \vec w^\H}{4k} \right] \vec w \le \tau \\
        & k \ge 0.
    \end{array}
\]
Because $\vec w^\H \left[\math R + \frac{\vec w \vec w^\H}{4k} \right] \vec w$ is continuous and monotonically decreasing in $k$ for every $\vec w \in \cal W$, the above optimization is further equivalent to 
\[
    \begin{array}{cl}
       \displaystyle \min_{k}  & k \\
        \st & \displaystyle \min_{\vec w \in \cal W} \vec w^\H \left[\math R + \displaystyle \frac{\vec w \vec w^\H}{4k} \right] \vec w \le \tau \\
        & k \ge 0.
    \end{array}
\]
Since $\min_{\vec w \in \cal W} \vec w^\H \left[\math R + \frac{\vec w \vec w^\H}{4k} \right] \vec w$ is also continuous and monotonically decreasing in $k$ and it tends to infinity when $k \to 0$, the above optimization is equivalent to 
\[
    \begin{array}{cl}
       \displaystyle \operatorname{find}   &  k \\
        \st & \displaystyle \min_{\vec w \in \cal W} \vec w^\H \left[\math R + \displaystyle \frac{\vec w \vec w^\H}{4k} \right] \vec w = \tau \\
        & k \ge 0.
    \end{array}
\]
This proves the first part of the theorem. The second part is technically similar; just note that 
\[
\begin{array}{cl}
d(\mat R, \math R) & \defeq \vec w^\H \Tr[\mat R - \math R]^\H[\mat R - \math R] \vec w \\
&= \vec w^\H \vec w \Tr[\mat R - \math R]^\H[\mat R - \math R],
\end{array}
\]
and therefore, the first-order optimality condition $2k \vec w^\H \vec w (\mat R - \math R) = \vec w \vec w^\H$ gives the globally optimal solution 
\[
\mat R = \math R + \vec w \vec w^\H / (2k \vec w^\H \vec w).
\]
The third part is also technically similar; just note that 
\[
\begin{array}{cl}
d(\mat R, \math R) & \defeq \vec w^\H \Tr[\mat R - \math R]^\H \mat C^{-1} [\mat R - \math R] \vec w \\
&= \vec w^\H \vec w \Tr[\mat R - \math R]^\H \mat C^{-1} [\mat R - \math R],
\end{array}
\]
and therefore, the first-order optimality condition $2k \vec w^\H \vec w \mat C^{-1} (\mat R - \math R) = \vec w \vec w^\H$ gives the globally optimal solution 
\[
\mat R = \math R + \mat C \vec w \vec w^\H / (2k \vec w^\H \vec w).
\]
This completes the proof.
\stp
\end{proof}

\bibliographystyle{IEEEtran}
\bibliography{References}

\begin{IEEEbiography}[{\includegraphics[width=1in,height=1.25in,clip,keepaspectratio]{wsx}}]{Shixiong Wang} received the B.Eng. degree in detection, guidance, and control technology, and the M.Eng. degree in systems and control engineering from the School of Electronics and Information, Northwestern Polytechnical University, China, in 2016 and 2018, respectively. He received his Ph.D. degree from the Department of Industrial Systems Engineering and Management, National University of Singapore, Singapore, in 2022. He is currently a Postdoctoral Research Associate with the Intelligent Transmission and Processing Laboratory, Imperial College London, London, United Kingdom, from May 2023. He was a Postdoctoral Research Fellow with the Institute of Data Science, National University of Singapore, Singapore, from March 2022 to March 2023. His research interest includes statistics and optimization theories with applications in signal processing (e.g., optimal estimation theory) and machine learning (e.g., generalization error theory).
\end{IEEEbiography}

\begin{IEEEbiography}
[{\includegraphics[width=1in,height=1.2in,clip,keepaspectratio]{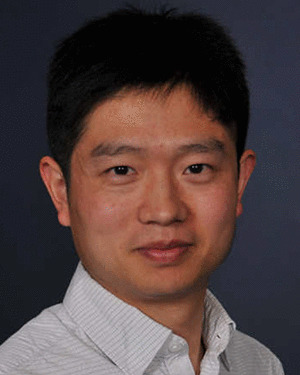}}] 
{Wei Dai} (Member, IEEE) received the Ph.D. degree from the University of Colorado Boulder, Boulder, Colorado, in 2007. He is currently a Senior Lecturer (Associate Professor) in the Department of Electrical and Electronic Engineering, Imperial College London, London, UK. From 2007 to 2011, he was a Postdoctoral Research Associate with the University of Illinois Urbana-Champaign, Champaign, IL, USA. His research interests include electromagnetic sensing, biomedical imaging, wireless communications, and information theory.
\end{IEEEbiography}

\begin{IEEEbiography}
[{\includegraphics[width=1in,height=1.2in,clip,keepaspectratio]{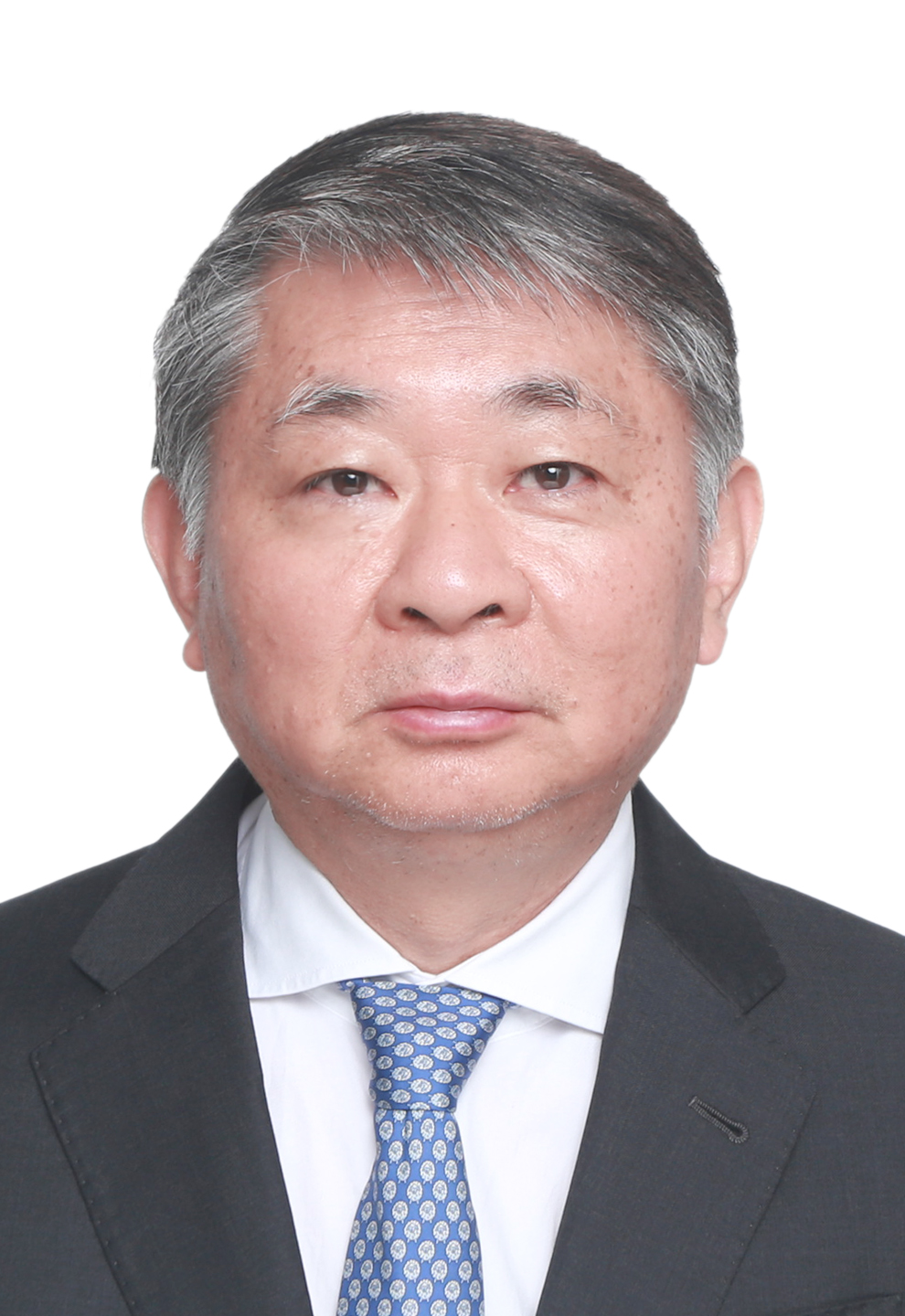}}] 
{Geoffrey Ye Li} (Fellow, IEEE) is currently a Chair Professor at Imperial College London, UK.  Before joining Imperial in 2020, he was a Professor at Georgia Institute of Technology for 20 years and a Principal Technical Staff Member with AT\&T Labs – Research (previous Bell Labs) for five years. He made fundamental contributions to orthogonal frequency division multiplexing (OFDM) for wireless communications, established a framework on resource cooperation in wireless networks, and introduced deep learning to communications. In these areas, he has published over 700 journal and conference papers in addition to over 40 granted patents. His publications have been cited around 80,000 times with an H-index over 130. He has been listed as a Highly Cited Researcher by Clarivate/Web of Science almost every year. Dr. Geoffrey Ye Li was elected to Fellow of the Royal Academic of Engineering (FREng), IEEE Fellow, and IET Fellow for his contributions to signal processing for wireless communications. He received 2024 IEEE Eric E. Sumner Award, 2019 IEEE ComSoc Edwin Howard Armstrong Achievement Award, and several other awards from IEEE Signal Processing, Vehicular Technology, and Communications Societies.

\end{IEEEbiography}

\end{document}